\newtheorem{definition}{Definition}
\newtheorem{assumption}{Assumption}
\newtheorem{theorem}{Theorem}
\newtheorem{lemma}{Lemma}
\newcommand{\ind}{\perp\!\!\!\perp}
\tikzset{
	%Define standard arrow tip
	>=stealth',
	%Define style for boxes
	true/.style={
		rectangle,
		draw=black, very thick,
		text width=6.5em,
		minimum height=2em,
		text centered,
		fill=gray, opacity = 0.5},
	punkt/.style={
		rectangle,
		rounded corners,
		draw=black, very thick,
		text width=6.5em,
		minimum height=2em,
		text centered},
	est/.style={
		circle,
		draw=black, very thick,
		text centered},
	shade/.style={
		circle,
		draw=black, very thick, fill=gray!50,
		text centered},
	weight/.style={
		circle,
		draw=black, very thick,
		text width=6.5em,
		minimum height=2em,
		text centered},
	% Define arrow style
	pil/.style={
		->,
		thick,
		shorten <=2pt,
		shorten >=2pt,},
	double/.style={
		<->,
		very thick,
		shorten <=2pt,
		shorten >=2pt,},
	dash/.style={
		dashed,
		very thick,
		shorten <=2pt,
		shorten >=2pt,},
	dashdouble/.style={
		<->,
		dashed,
		thick,
		shorten <=2pt,
		shorten >=2pt}}
\date{}
\begin{document}
	\def\spacingset#1{\renewcommand{\baselinestretch}%
		{#1}\small\normalsize} \spacingset{1}
	%%%%%%%%%%%%%%%%%%%%%%%%%%%%%%%%%%%%%%%%%%%%%%%%%%%%%%%%%%%%%%%%%%%%%%%%%%%%%%
	\title{\Large\bf Data-driven Automated Negative Control Estimation (DANCE): Search for, Validation of, and Causal Inference with Negative Controls}
	\author{Erich Kummerfeld$^1$, Jaewon Lim$^2$, and Xu Shi$^3$ \\
		$^1$Institute for Health Informatics, University of Minnesota
		\\$^2$Department of Biostatistics, University of Washington
		\\$^3$ Department of Biostatistics, University of Michigan}
	\maketitle
	
	\bigskip
	\begin{abstract}
	Negative control variables are increasingly used to adjust for unmeasured confounding bias in causal inference using observational data. They are typically identified by subject matter knowledge and there is currently a severe lack of data-driven methods to find negative controls. 
In this paper, we present a statistical test for discovering negative controls of a special type---disconnected negative controls---that can serve as surrogates of the unmeasured confounder, and we incorporate that test into the Data-driven Automated Negative Control Estimation (DANCE) algorithm. DANCE first uses the new validation test to identify subsets of a set of candidate negative control variables that satisfy the assumptions of disconnected negative controls.  It then applies a negative control method to each pair of these validated negative control variables, and aggregates the output to produce an unbiased point estimate and confidence interval for a causal effect in the presence of unmeasured confounding. We (1) prove the correctness of this validation test, and thus of DANCE; (2) demonstrate via simulation experiments that DANCE outperforms both naive analysis ignoring unmeasured confounding and negative control method with randomly selected candidate negative controls; and (3) demonstrate the effectiveness of DANCE on a challenging real-world problem.
	\end{abstract}
	
	\noindent
	{\textit Keywords:}  causal discovery; graphical models; negative control; unmeasured confounding; vanishing tetrad.
	\vfill
	
	\newpage
	\spacingset{1.45} 
	
\section{Introduction}
There are many causal questions in science and medicine that can not be solved with randomized experiments now or in the foreseeable future. For such questions, our best estimates must thus rely on observational data instead. The rich field of causal inference has developed in response to this, providing support for these efforts and developing methods that offer some level of assurance and confidence for learning causal information from observational data \citep{pearl2009causality,rubin1974estimating}. 
	Many causal inference methods assume that there are no unmeasured common causes of treatment and outcome, but it is generally believed that in reality unmeasured confounders are widespread. This is a serious limitation to the methods that make such assumptions. One of the most frequently used approaches to mitigate unmeasured confounding is the instrumental variable (IV) approach \citep{angrist1991does,angrist1996identification,hernan2006instruments}, which has been previously studied extensively  \citep{greenland2000introduction,baiocchi2014instrumental,garabedian2014potential,burgess2017review,swanson2018partial}.

	A more recently developed strategy is {negative control} (NC)  methods \citep{lipsitch2010negative,shi2020selective,tchetgen2020introduction}.
	Negative controls are  variables associated with the unmeasured confounders but not causally related to either the treatment or outcome variables of primary interest.
	One can detect residual confounding bias leveraging such known null effects: 
	presence of an association between the negative control and the exposure or outcome  constitutes compelling evidence of residual confounding bias, while the absence of such association implies no empirical evidence of such bias. 
	NCs have traditionally been used to rule out non-causal explanations  of empirical findings
	\citep{rosenbaum1989role,weiss2002can,lipsitch2010negative,glass2014experimental}. Recently, a sequence of NC methods have been developed to identify causal effects and correct for unmeasured confounding bias  \citep{miao2018identifying,deaner2018proxy,shi2018multiply,singh2020kernel,cui2020semiparametric,ying2021proximal,kallus2021causal,dukes2021proximal,li2022doubly}. 
	
	A key challenge in the use of NC methods is that until now, NC variables have had to be identified laboriously from background knowledge.	It also had to be assumed that the identified variables were genuine NCs, as no validation test existed unless one is willing to make additional assumptions. Such situations are common in causal inference, e.g., the assumption of no unmeasured confounding is also untestable. Nevertheless, we will show that under certain conditions, it is possible to leverage certain subcovariance matrix rank constraints to validate a particular class of NC variables, referred to as disconnected NCs which we formally define in Section~\ref{sec:negcontrol}, satisfying a specific causal structural model.
	
	In this paper, we utilize some lesser known theory regarding relationships between subcovariance matrix rank constraints and the graphical structure of causal models to provide both theory and algorithms for evaluating NC variables. First, we provide a statistical test that can be used to determine whether a triplet of candidate NCs are real disconnected NCs or not. Second, we provide a simple algorithm for searching among a set of candidate NCs, and identifying subsets of those variables that collectively meet the conditions of being disconnected NCs. Third, we combine our proposed method for finding valid NC variables with a recently developed double-NC method for causal inference \citep{miao2018confounding,shi2018multiply,cui2020semiparametric}, creating an algorithm that accurately estimates and makes inferences about causal effects from observational data. We refer to the proposed method as the Data-driven Automated Negative Control Estimation (DANCE) algorithm. We prove that our proposed methods are correct under fairly general assumptions, evaluate their  finite sample performance with a series of numerical experiments, and demonstrate their usability on a real world data set.
	
	The rest of the paper is organized as follows. In Section \ref{background} we review the three main topics that the work in this paper builds upon: negative controls, structural models, and rank constraints. We then present a statistical validation test for disconnected NCs in Section \ref{validation}, and prove its correctness in Section \ref{DNCTcorrectness}. Section \ref{FNCalgo} presents an algorithm that searches a set of candidate NC variables to find sets of disconnected NCs which pass the validation test, and Section \ref{DNCcausalinference} presents the DANCE algorithm that combines with the double-NC method to construct an all-in-one method for producing a valid causal effect estimate from a data set containing a collection of candidate NC variables, some of which are not necessarily valid disconnected NCs. Section \ref{simulations} presents numerical experiments to evaluate our proposed test and algorithms, and compares them to two methods: a simple regression method ignoring unmeasured confounding and  a random selection of candidate NCs followed by the double-NC method. An application of DANCE to a real clinical data set is described in Section \ref{demonstrations}. Section \ref{Discussion} summarizes the strengths and limitations of the methods presented in this paper, and points towards promising directions for future work.
	
	\section{Background}
	\label{background}

	\subsection{Unmeasured Confounding and Negative Control Methods}
	\label{sec:negcontrol}
	We adopt the potential outcome framework under the Stable Unit Treatment Value Assumption (SUTVA) \citep{rubin1974estimating,rubin1980discussion,cox1992planning} and let $(O(1),O(0))$ denote the pair of potential outcomes under treatment and control conditions, respectively. We are interested in estimating the average treatment effect (ATE), defined as $\Delta=E[O(1)-O(0)]$. It suffices to identify the counterfactual mean $E[O(t)]$ for $t\in \{0,1\}$.
	Let $O$ denote the observed outcome and $T$ denote the binary treatment. 
	We suppress measured covariates for simplicity; adjustment for measured covariates is discussed in Section \ref{subsec:doublenc}.
	%We assume the standard assumption of consistency, i.e., the observed outcome $O=O(t)$ if $T=t$ for $t\in \{0,1\}$. 
	
	Instead of making the no unmeasured confounding assumption, we allow the presence of an unmeasured confounder $U$ with a latent ignorability assumption that $O(t)\ind T\mid U$. If $U$ was measured, then $E[O(t)]$ is identified under the ignorability assumption \citep{robins1986new}. However, when $U$ is unobserved and unadjusted, ATE estimation will be biased. In this case, additional information is needed to identify and make inference about the ATE. 

	An increasingly popular approach to mitigate bias due to unmeasured confounding is to use its proxies. For example, as shown in Figure~\ref{fig:DAG}, 
	if $U$ can be measured with error via proxy variables $Z$ and $W$, then one can leverage $Z$ and $W$ to identify the confounding bias due to $U$ and remove such bias from the estimated causal effect. Such proxy variables have been referred to as negative controls \citep{lipsitch2010negative,shi2020selective}. Formally, a negative control outcome, denoted as $W$, is a variable known not to be causally affected by the treatment of interest. Likewise, a negative control exposure, denoted as $Z$, is a variable known not to causally affect the outcome of interest. The negative control exposure and outcome variables should share a confounding mechanism with the exposure and outcome variables of primary interest. In summary, $Z$ and $W$ satisfies 
	\begin{equation}
	    (T,Z)\ind (O(t),W)\mid U. \label{eq:NC}
	\end{equation}
	There are a number of causal graphs that satisfy the NC assumptions \citep{shi2020selective}. For example, both a valid instrumental variable independent of the unmeasured confounder and an invalid instrumental variable associated with the unmeasured confounder are valid negative control exposures.  Alternative directed acyclic graphs encoding the NC assumptions are available in \cite{shi2020selective}.

Figure~\ref{fig:DAG} presents a special case where $Z$ and $W$ are causally related to neither the treatment nor the outcome of interest, hence $Z$ and $W$ can serve as either negative control exposure or negative control outcome \citep{shi2018multiply,tchetgen2020introduction}. 
We refer to such a special class of NC variables as the \emph{disconnected NCs}. Formally, the disconnected NCs satisfy the following assumption
\begin{equation*}
    (Z,W)\ind (T,O)\mid U.
\end{equation*}
Compared to the fundamental NC assumption \eqref{eq:NC}, the disconnected NCs satisfy additional assumptions that $Z\ind T\mid U$ and $W\ind O\mid U$.

\begin{figure}[h]
	\centering
	\resizebox{.5\textwidth}{!}{%
		\begin{tikzpicture}
			\node[est]  (A) at (0.5,-1.3) {T};
			\node[est] (Y) at (3.5,-1.3) {O};
			\node[shade] (U) at (2,0.5-0.3) {U};
			\node[est] (Z) at (5,0.5-0.3) {Z};
			\node[est] (W) at (-1,0.5-0.3) {W};
			%	\node[est] (IV) at (-2.5,-1.5) {IV};
			\foreach \from/\to in {A/Y,U/A,U/Y,U/W,U/Z}\path[->,pil,black] (\from) edgenode {} (\to);
			%	\draw (Z) -- (A);
			%%%%
			\coordinate [label={treatment}] (treatment1) at (0.5,-2.15);
			%%%%
			\coordinate [label={outcome}] (outcome) at (3.5,-2.15);
			%%%%
			\coordinate [label={unmeasured}] (confounders) at (2,0.6+0.5-0.15);
			\coordinate [label={confounders}] (confounders) at (2,0.35+0.4-0.15);
			%%%%
			\coordinate [label={negative}] (NCE0) at (5,-0.65-0.1);
			\coordinate [label={control}] (NCE0) at (5,-0.9-0.1);
			%%%%
			\coordinate [label={negative}] (NCO0) at (-1.05,-0.66-0.1);
			\coordinate [label={control}] (NCO0) at (-1.05,-0.95-0.1);
		\end{tikzpicture} %
	}
	\caption{\label{fig:DAG} Causal graph of two disconnected NCs, Z and W, suppressing the measured covariates X which is implicitly conditioned on in all arguments.}
\end{figure}
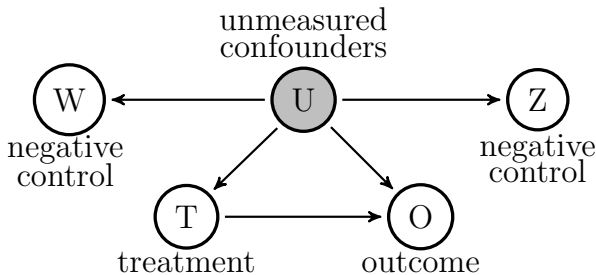

Using a pair of negative control exposure and outcome variables, referred to as the double-NC, \cite{miao2018identifying} established nonparametric identification of the average treatment effect (ATE). 
Intuitively, having additional children of $U$ that are conditionally independent with $T$ and $O$ allows for identification of the unmeasured confounding bias due to the influence of $U$ on $T$ and $O$, and subsequently this quantity can be removed from the association between $T$ and $O$, leaving an unbiased estimate of $T$'s effect on $O$. Recently, the NC framework has been extended to proximal causal inference, which partitions measured covariates into proxies satisfying NC conditions, acknowledging that covariate measurements are at best proxies of the underlying confounding mechanisms \citep{tchetgen2020introduction,cui2020semiparametric,ying2021proximal,dukes2021proximal}.

	\subsection{Structural Models}\label{StructuralModels}
	The rest of this paper makes heavy use of the theory of structural graphical models, especially structural equation models. This section provides background terminology and definitions from this field, which can be used as an introduction for readers who are unfamiliar with it, or as a reference for readers who are already familiar with it.
	
	A \emph{directed graph} is a pair of sets, $\langle \Phi, \Psi \rangle$, where $\Phi$ contains some number of variables, and $\Psi$ contains some number of directed edges, or arrows, pointing from one variable in $\Phi$ to another variable in $\Phi$. It is often important to consider not just individual edges, but \emph{paths} in a directed graph, which are ordered lists of edges in a graph such that each edge shares one of its endpoint variables with the edges before and after it. In other words, a path is a connected sequence of edges. A special kind of path is a \emph{trek}, where nowhere along the length of the path are there two consecutive edges with arrows pointing to the variable between them. In other words, in a trek there are no subsections that look like this: $\rightarrow X \leftarrow$. A special type of trek is the \emph{directed path}, where all the edges point in the same direction along the path. Probably the most commonly studied subclass of directed graphs is \emph{directed acyclic graphs} (DAGs), which are directed graphs that contain no directed paths passing through the same variable twice, i.e., directed graphs without any directed cycles.
	
	Directed graphs are often augmented with additional quantitative and statistical information, forming a statistical model. Two commonly used examples are \emph{structural equation models} (SEMs) and \emph{Bayesian networks} (BNs). The specific meanings of these terms often vary from one paper or research group to another. In this paper, we consider SEMs to refer to DAGs which have been augmented with an additive Gaussian noise term $\epsilon_X$ for each variable $X$ in the DAG, as well as a linear coefficient $\beta_{VX}$ corresponding to each edge $V \rightarrow X$. As such, each variable's value is calculable from the value of its \emph{parents} (the nodes with edges pointing to it) and the value of its independent noise term. Let $parents(\cdot)$ be a function that returns all of the parents of the variable in its argument, then the SEM can be written as a collection of equations of the form: 
	\[X = \sum_{V \in parents(X)}\beta_{VX} V + \epsilon_X.\]
	
	In this paper, we consider BNs to be similar to the SEM class of models, but using categorical variables instead of continuous variables. Each variable in a BN has its probability distribution determined by a conditional probability table based on the values of that variable's parents. As such, like an SEM, it specifies a complete joint distribution that can be decomposed into separate components for each of its variables. As we are defining them here, a BN specifies a multinomial distribution, while an SEM specifies a multivariate Gaussian distribution. 
	
	Variables can be either \emph{measured}, or \emph{unmeasured}. Unmeasured variables are also called \emph{hidden} variables or \emph{latent} variables. Whether a variable is measured or unmeasured has no effect on the SEM or BN itself, but rather refers to what variables are available in data collected from that SEM or BN. Unmeasured variables are absent from the data entirely, and so all their values are missing. Notoriously, since unmeasured variables are still part of the data generating model, measured variables in the model can be correlated with each other because an unmeasured variable is the parent of both of them. Such an unmeasured variable is considered an \emph{unmeasured confounder} in many contexts. In causal inference from observational data, unmeasured confounders are a potential source of bias, as it is a source of spurious association that can occur even when neither of the two measured variables influences the other in any way.

	\subsection{Rank Constraints and the Vanishing Tetrad Test}
	\label{sec:rankconstraint}
	
	It is well known that the graphical structure $\mathcal{G}$ of a causal structural model over variables $V$ entails constraints among the partial correlations of the observed (measured) variables $M \subseteq V$. Partial correlation constraints have been leveraged to develop causal discovery algorithms, including algorithms that still operate correctly in the presence of unmeasured common causes, such as Fast Causal Inference (FCI) \citep{spirtes2000causation,zhang2008completeness,hyttinen2013discovering}. These methods all depend on theoretical work establishing theorems that systematically relate graphical structures to their implied partial correlation constraints.
	
	A lesser known type of constraint that is also implied by graphical structure is rank constraints on the subcovariance matrix. A subcovariance matrix is a covariance matrix between two sets of variables, $S_1$ and $S_2$, with $S_1,S_2 \subseteq M$. As first proved by \citet{sullivant2010trek}, the rank of each subcovariance matrix will have bounds due to features of the graph $\mathcal{G}$. Let $cov(X,Y)$ be the covariance between two random variables $X$ and $Y$, then the subcovariance matrix for $S_1=\{W,Z\}$ and $S_2=\{T,O\}$ can be written as: $$
	\Sigma_{S_1,S_2}=\Sigma_{\{W,Z\},\{T,O\}}=\begin{pmatrix}
cov(W,O) & cov(W,T)  \\
cov(Z,O) & cov(Z,T)  \\
\end{pmatrix}.
	$$
	Sullivant's theorems state that, for the graph in Figure~\ref{fig:DAG}, since all paths from $W$ and $Z$ to $T$ and $O$ pass through one variable, $U$,  the rank of matrix $\Sigma_{\{W,Z\},\{T,O\}}$ is less than its dimension. This also implies that the determinant of $\Sigma_{\{W,Z\},\{T,O\}}$ is zero, i.e., $cov(W,O)cov(Z,T) - cov(W,T)cov(Z,O) = 0$. This is referred to as a \textit{vanishing tetrad}. The same is not true of the subcovariance matrix for $\Sigma_{\{W,T\},\{Z,O\}}$, since there is an additional path from $T$ to $O$ that does not include $U$. These theorems were then extended by \cite{spirtes2013calculation} to relax some of their linearity and acyclicity assumptions. A growing number of papers have made use of these theorems to identify the presence of unmeasured common causes, and to even make inferences about the causal relationships among unmeasured common causes \citep{kummerfeld2014causal,kummerfeld2016causal,yang2017finding}.
	
	The theorems relating rank constraints to graphical structure make use of some lesser known graph concepts, in particular the concepts of \emph{trek} and \emph{trek-separation}, typically abbreviated as \emph{t-separation}. We here provide those definitions, including some intermediary definitions necessary for defining trek-separation. In a directed graphical model, a trek is a path with no colliders on it, i.e., no variable within the trek is a direct child of the variables both before and after it in the trek. A variable is said to block a trek if it lies anywhere along that trek. Any trek has exactly one \emph{root} node which has no parents in the trek, which can be any node along the trek including the endpoints. Let $t$ be an arbitrary trek from node $A$ to $B$ in directed graph $\mathcal{G}$, with root node $C$. The portion of the trek $t$ from $C$ to $A$ is naturally also a trek, as is the portion of $t$ from $C$ to $B$. Let $D$ be a node along $t$. We say that $D$ blocks $t$ on the $A$-side of $t$ if $D$ lies along the trek from $C$ to $A$, and likewise $D$ blocks $t$ on the $B$-side of $t$ if it lies along the trek from $C$ to $B$. An ordered pair of sets of variables $\langle S_A, S_B\rangle$ is said to \emph{t-separate} variables $A$ and $B$ in graph $\mathcal{G}$ if for every trek $t$ from $A$ to $B$ in $\mathcal{G}$, $t$ is either blocked on the $A$-side by a variable in $S_A$ or on the $B$-side by a variable in $S_B$. For example, in the graph presented in Figure~\ref{fig:SimGraph2},
	for $A=\{Z_4,Z_5\}$, $B=\{Z_6,Z_7\}$, we have $C=U$, and they can be trek separated in multiple ways, including: $S_A=\{U\}$ and $S_B=\{\}$; $S_A=\{\}$ and $S_B=\{U\}$; $S_A=\{U,T\}$ and $S_B=\{U,Z_1\}$; and so on. By comparison, if $A=\{Z_4,Z_6\}$ and $B=\{Z_5,Z_7\}$, then sets like $S_A=\{U\}$ and $S_B=\{\}$ no longer t-separate $A$ from $B$, since $U$ does not block $Z_4$ from $Z_5$ or $Z_6$ from $Z_7$. Additional variables would have to be included in $S_A$ or $S_B$ to block these additional treks.

	In this paper we make use of the extended trek separation theorems of \citet{spirtes2013calculation}, which relate statements about t-separation in the data generating model to rank constraints in the covariance matrix among certain sets of nodes. The presence or absence of rank constraints can be determined from empirical data using statistical tests for vanishing determinants.  \citet{wishart1928sampling} created a statistical test for the null hypothesis that the determinant of a subcovariance matrix is zero, assuming that the relevant portions of the structural model are Gaussian. Alternative tests \citep{bollen1990outlier,bollen1993confirmatory} that relax this distributional assumption tend to be more computationally intensive and appear to have reduced power. Moreover, in practice, even when the distribution is non-Gaussian, the Wishart test often performs well with a large sample \citep{spirtes2013calculation,silva2017learning}. Therefore in our simulations and applications presented later in this paper, we use the Wishart test \citep{wishart1928sampling}.
	
	For sets of variables $S_1$ and $S_2$ such that $|S_1| = |S_2|$, let $D_{S_1,S_2}$ be the determinant of the subcovariance matrix $\Sigma_{S_1,S_2}$ of all the variables in $S_1$ by all the variables in $S_2$. The Wishart test calculates the distribution of the empirically observed value of $D_{\{a,b\},\{c,d\}}$, denoted as $\widehat{D}_{\{a,b\},\{c,d\}}$, under the null hypothesis that the true determinant is zero and the corresponding tetrad vanishes, i.e., $H_0: D_{\{a,b\},\{c,d\}}=0$. 
	Let $N$ be the sample size of the observed data. The estimated variance
	of $\widehat{D}_{\{a,b\},\{c,d\}}$ is
	$\widehat{\sigma}^2 = \{\widehat{D}_{\{a,b\},\{a,b\}}\widehat{D}_{\{c,d\},\{c,d\}}(N+1)/(N-1) -\widehat{D}_{\{a,b,c,d\},\{a,b,c,d\}}\}/(N-2)$ \citep{wishart1928sampling}.
	The Wishart test then computes $$W=\widehat{D}_{\{a,b\},\{c,d\}}/\widehat{\sigma},$$ and forms a p-value for a two-sided significance test based on the asymptotic normal distribution with mean zero under the null. 
	This is referred to as the vanishing tetrad test.
	
	While the algorithms we present in this manuscript do not rely on Gaussian distributed data, the specific test of vanishing tetrad we use here---the Wishart test---does assume a Gaussian distribution. The same asymptotic correctness results can be extended to other distributions by using other tetrad tests, such as the distribution-free tetrad test developed by  \citet{bollen1993confirmatory}. We use the Wishart test because it is simple to compute and appeared to have better power at plausible finite sample sizes during numerical experiments.
	
	\section{The Statistical Validation of Negative Controls}
	\label{validation}
	
	\subsection{Definitions and Assumptions}
	Before stating the validation test for NCs and then proving its correctness, it will be useful to  present some definitions and assumptions. We first define the following simple NC model that involves an unmeasured confounder $U$ and children of $U$.
	\begin{definition}
	\label{def:sncm}
	A \emph{simple NC model} is a structural equation or Bayesian network model, with measured variables ${M}$, unmeasured variables ${U}$, structural causal relationships ${S}$ and causal coefficients ${C}$, that meets the following criteria:
	\begin{enumerate}
		\item \itemsep -3pt $|{U}|=1$, i.e., there is only one unmeasured variable. Let this unique unmeasured variable be denoted by $U$.
		\item $U$ directly causes all elements of ${M}$.
		\item ${M}$ has two distinct privileged variables, denoted by $T$ and $O$. $T$ may cause $O$, but no other variable in ${M}$ directly causes or is caused by $T$ or $O$.
	\end{enumerate}
	See Figures \ref{fig:DAG} and \ref{fig:Graph1} for examples of simple NC models. Let variables in ${M} \setminus \{X,Y\}$ be called candidate NCs  which are not necessarily valid disconnected NCs.
	\end{definition}
	We clarify that although the simple NC model may appear to be limited to one unmeasured confounder $U$, such an unmeasured confounder can be a spectrum of multiple latent variables. For example, suppose $U$ denotes healthcare seeking behavior, a common source of unmeasured confounding bias of concern in vaccine effectiveness studies, then it is likely that $U=\sum_{j}\alpha_j L_j$, where $L_j$ denotes latent variables such as perceptions of illness and treatment, access to healthcare, and insurance coverage, and $\alpha_j$ denotes the corresponding importance of each latent factor.
	
	In this paper, we will focus on identifying a particular class of NC variables, which we refer to as the {disconnected NCs}. %Besides the fundamental NC assumptions, 
	As presented in Figure~\ref{fig:DAG} and defined in Section~\ref{sec:negcontrol}, a disconnected NC is independent of both the treatment $T$ and the outcome $O$ conditional on the unmeasured confounder $U$. 
	That is, any trek between a disconnected NC and $T$ (or $O$) must pass through $U$. Hence a tetrad of three disconnected NCs plus $T$ (or $O$) may be a vanishing tetrad as introduced in Section~\ref{sec:rankconstraint}, while a tetrad of two disconnected NCs plus both $T$ and $O$ will not vanish due to the potential direct path from $T$ to $O$ that does not go through $U$. Motivated by this observation, 
	our validation test leverages the vanishing tetrad test as detailed in the next section and relies on at least three disconnected NCs.
	%Due to the potential direct path from $T$ to $O$ that does not goes through $U$, a tetrad including both $T$ and $O$ will not vanish and does not allow us to identify disconnected NC. To construct tetrads that vanish, we rely on at least three disconnected NCs, which we define below.
	\begin{definition} A \emph{disconnected negative control triplet} (DNCT) is a set of three candidate NCs in a simple NC model such that all treks from one member of the DNCT to another pass through $U$.
	\end{definition}
	Our DNCT definition %not only excludes pathways that connect a disconnected NC to $T$ (or $O$) without passing through $U$, but also 
	excludes pathways connecting any two disconnected NCs without passing through $U$, because such a pathway will lead to a tetrad that does not vanish and will also violate the NC assumption \eqref{eq:NC} when using such a pair of disconnected NCs as double-NC.
	We make the following assumptions necessary for detecting DNCT in a graph $\mathcal{G}$.
	
	\begin{assumption} 
	\label{ass:1}
	The data is generated by a simple NC model with $|{M}| \geq 5$. This implies that there are at least three candidate NCs.
	\end{assumption}	
	
	\begin{assumption} 
		\label{ass:2}
	\emph{Tetrad Faithfulness}. In the data distribution implied by the simple NC model, tetrads vanish only if they are implied to vanish by the structure of the simple NC model. In other words, tetrads do not vanish as an ``accident'' of the model's particular coefficients.
		\end{assumption}
	
	\begin{assumption} 
	\label{ass:SNCMlinear}
	The data is generated by a simple NC model which is linear and acyclic among its measured variables.
		\end{assumption}

	\subsection{The DNCT Validation Test}
	\label{sec:DNCTtest}
	We now introduce our proposed validation test for the DNCT. For a particular treatment and outcome, the DNCT validation test determines whether a set of three candidate NCs is a DNCT, i.e., a triplet of disconnected NC variables. It serves as a validation test for determining whether a proposed set of candidate NCs meets the assumptions necessary for causal inference with NCs. This test takes the following as input:
	\begin{enumerate}\itemsep -3pt
		\item A table of data, $Data$;
		\item A variable $T$ in the data identified as the treatment;
		\item A variable $O \neq T$ in the data identified as the outcome;
		\item Three candidate NCs in the data which cannot include $T$ or $O$;
		\item A vanishing tetrad test, $VanTetrad(S_1,S_2,Data)\rightarrow \{TRUE,FALSE\}$, for two sets $S_1$ and $S_2$ each containing two variables;
		\item A hyperparameter value between 0 and 1, \emph{alpha}, used by the vanishing tetrad test.
	\end{enumerate}
	
	The output of this test is a true/false Boolean value: true indicates that the set of three candidate NCs forms a DNCT for $T$ and $O$, and false indicates that the set of three candidate NCs is not a DNCT for $T$ and $O$. Our DNCT validation test requires that six specific vanishing tetrad tests must not reject their corresponding null hypotheses that the determinant of a subcovariance matrix is zero, i.e., the corresponding tetrads vanish. More specifically, let $X$, $Y$, and $Z$, be the three candidate NCs being tested. Our test returns $TRUE$ if and only if the following vanishing tetrad tests all accept their null hypotheses that the corresponding tetrads vanish:
	\begin{enumerate}\itemsep -3pt
		\item $VanTetrad(\{X,Y\},\{Z,T\},Data)$;
		\item $VanTetrad(\{X,Z\},\{Y,T\},Data)$;
		\item $VanTetrad(\{Z,Y\},\{X,T\},Data)$;
		\item $VanTetrad(\{X,Y\},\{Z,O\},Data)$;
		\item $VanTetrad(\{X,Z\},\{Y,O\},Data)$;
		\item $VanTetrad(\{Z,Y\},\{X,O\},Data)$.
	\end{enumerate}
	
	These six tetrad tests cover all combinations where exactly one of $T$ or $O$ is included in the tetrad. Intuitively, this tests whether there is more than one pathway connecting any of the candidate NCs to each other or to $T$ or $O$.
	
	\subsection{Correctness of the DNCT Validation Test}
	\label{DNCTcorrectness}
	We have the following theorem which states that the DNCT validation test can correctly identify valid DNCTs from a set of candidate NCs.
	\begin{theorem}
		\label{DNCTtest_thm}
		Let the data be generated by a simple NC model, $\mathcal{G}$, and assume Assumptions~\ref{ass:1}, \ref{ass:2}, and \ref{ass:SNCMlinear}. The DNCT validation test will return TRUE for any DNCT in $\mathcal{G}$, and FALSE otherwise.
	\end{theorem}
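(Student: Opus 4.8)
The plan is to pass to the population (structural) level and reduce the statement to a graph‑theoretic claim about treks, then settle that claim by a short case analysis. Reading the conclusion at the population level --- equivalently, assuming the vanishing‑tetrad test is consistent, so that asymptotically it accepts $H_0:D_{\{a,b\},\{c,d\}}=0$ exactly when the population determinant vanishes --- the extended trek‑separation theorem of \citet{spirtes2013calculation}, via its always‑valid rank‑bound half together with Tetrad Faithfulness (Assumption~\ref{ass:2}) via its generic half, yields: for measured $a,b,c,d$, $D_{\{a,b\},\{c,d\}}=0$ if and only if some ordered pair $\langle S_A,S_B\rangle$ with $|S_A|+|S_B|\le 1$ t‑separates $\{a,b\}$ from $\{c,d\}$ in $\mathcal G$. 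Hence the DNCT validation test returns \emph{TRUE} precisely when each of its six tetrad nulls admits such a t‑separator --- a single blocking node or the empty set, i.e.\ a \emph{choke point}. I would also record two structural facts about a simple NC model: (i) $U$ is a source node (it is the only latent variable and is an ancestor of every measured variable), so whenever $U$ lies on a trek it is that trek's root and therefore blocks the trek on either side; and (ii) because $T$ and $O$ are privileged, $U$ is the only parent of $T$, the parents of $O$ lie in $\{U,T\}$, and $O$ is childless, so inspecting the edge incident to the $T$‑ or $O$‑endpoint of a trek shows that every trek from a candidate NC to $T$, and every trek from a candidate NC to $O$, passes through $U$.

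To show that a DNCT passes the test, I would fix a DNCT $\{X,Y,Z\}$ and one of the six tetrads. By construction its two sides together comprise $X,Y,Z$ and exactly one of $T,O$, so each cross pair is either two DNCT members --- all connecting treks through $U$ by the definition of a DNCT --- or one DNCT member together with $T$ or $O$ --- all connecting treks through $U$ by fact (ii). Hence every trek between the two sides passes through $U$, and by fact (i) the pair $\langle\{U\},\emptyset\rangle$ t‑separates them with total size $1\le 1$, so the tetrad vanishes. Since this holds for all six tetrads, the test returns \emph{TRUE}.

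For the converse I would argue the contrapositive: assume all six tetrad nulls hold and deduce that $\{X,Y,Z\}$ is a DNCT. Because $U$ causes every measured variable, the treks $X\leftarrow U\rightarrow T$, $Y\leftarrow U\rightarrow Z$, and $X\leftarrow U\rightarrow Z$ all exist and run between the two sides of the first tetrad, $\{X,Y\}$ versus $\{Z,T\}$; their vertex sets $\{X,U,T\}$, $\{Y,U,Z\}$, $\{X,U,Z\}$ meet only in $\{U\}$. Any choke point for that tetrad must lie on all three of these treks and hence must be $U$ (and cannot be empty, since treks exist), so every trek between $\{X,Y\}$ and $\{Z,T\}$ passes through $U$; in particular every $X$--$Z$ trek and every $Y$--$Z$ trek does. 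Running the same argument on tetrads~2 and~3 gives that every $X$--$Y$ and $Z$--$Y$ trek, and every $Z$--$X$ and $Y$--$X$ trek, passes through $U$. Together these cover all three pairs from $\{X,Y,Z\}$, so every trek between two of them passes through $U$, i.e.\ $\{X,Y,Z\}$ is a DNCT. (Tetrads~4--6 merely repeat this with $O$ in place of $T$, so they are not logically needed for this direction, though including them is harmless.)

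The main obstacle will be making facts (i)--(ii) and the choke‑point deductions airtight, since t‑separation is side‑sensitive: one must verify that $U$ blocks each relevant trek on the \emph{correct} side (which is exactly where exogeneity of $U$ does real work) and must exclude degenerate configurations --- for instance, that $T$ or $O$ is itself the root of a trek to a candidate NC, which is ruled out using that $O$ is childless and that $T$'s only child, $O$, is childless. A secondary point is the first‑step reduction, which genuinely needs both halves of the trek‑separation characterization (the rank bound for the ``DNCTs pass'' direction, genericity/faithfulness for the converse) plus consistency of the chosen tetrad test; Assumptions~\ref{ass:1} and~\ref{ass:SNCMlinear} enter only to ensure, respectively, that triples of candidate NCs exist so the test is well posed, and that the linear–acyclic hypotheses underlying the trek‑separation machinery are met.
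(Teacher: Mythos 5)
Your proposal is correct, and its first half is essentially the paper's own argument: for a DNCT, every trek between the two sides of each of the six tetrads passes through $U$, $U$ is the root of those treks, so $\langle\{U\},\varnothing\rangle$ t-separates with $|S_A|+|S_B|=1$ and Theorem~\ref{thm:PETS1} forces the determinant to vanish (this is Lemma~\ref{LemmaAll}). Where you genuinely diverge is the FALSE direction. The paper (Lemma~\ref{LemmaOnly}) fixes a non-DNCT triplet, takes the extra trek between, say, $A$ and $B$ that avoids $U$, singles out the tetrad $\{A,C\}$ vs.\ $\{B,T\}$, and shows by a two-case analysis (either $A\in S_A$ or $B\in S_B$, or neither) that any t-separating pair has $|S_A|+|S_B|\ge 2$, so by Assumptions~\ref{ass:2} and~\ref{ass:SNCMlinear} (via Theorem~\ref{thm:PETS2}) that tetrad does not vanish and the test rejects. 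You instead prove the contrapositive: if all the $T$-side tetrads vanish, then by faithfulness plus Theorem~\ref{thm:PETS2} each admits a separator of total size $\le 1$, and since the treks $X\leftarrow U\rightarrow T$, $Y\leftarrow U\rightarrow Z$, $X\leftarrow U\rightarrow Z$ meet only at $U$, that separator must be $\{U\}$, forcing every trek among $X,Y,Z$ through $U$, i.e.\ DNCT-hood. Both arguments use exactly the same machinery (the two extended trek-separation theorems plus tetrad faithfulness and test consistency); yours buys a slightly cleaner uniqueness-of-choke-point argument and the observation that tetrads 4--6 are logically redundant for this direction, while the paper's case analysis more directly exhibits, for a given invalid triplet, a specific tetrad that must fail. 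Your side-sensitivity caveats (that $U$, being parentless, is always the trek root and so blocks on either side, and that treks reaching $T$ or $O$ must enter through $U$ because $O$ is childless and $T$'s only other neighbor is $O$) are exactly the points the paper handles via the simple-NC-model structure, so there is no gap.
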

	
	 We present a rationale for the theorem. A formal proof is provided in Appendix~\ref{appendix:technicallemma}.
	We make critical use of prior work  in proving correctness of the DNCT validation test \citep{sullivant2010trek,spirtes2013calculation}. In particular, we will heavily leverage two theorems in \citet{spirtes2013calculation} which we present below. In these theorems, $A$, $B$, $S_A$, and $S_B$ are sets of variables.
	
	\begin{theorem}{Extended Trek Separation Theorem 1 (Spirtes 2013).}\label{thm:PETS1}
		Suppose $\mathcal{G}$ is a directed graph containing $S_A$, $A$, $S_B$, and $B$, and $(S_A;S_B)$ t-separates $A$ and $B$ in $\mathcal{G}$. Then for all covariance matrices entailed by a fixed parameter structural equation model $S$ with path diagram $\mathcal{G}$ that is linear acyclic below the sets $S_A$ and $S_B$ for $A$ and $B$, $rank(\Sigma_{A,B}) \leq |S_A| + |S_B|$.
	\end{theorem}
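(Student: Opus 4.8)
The plan is to prove the rank bound by an explicit algebraic factorization of the subcovariance matrix $\Sigma_{A,B}$ that is dictated by the t-separation structure. The central tool is the \emph{trek rule} (the path-tracing identity going back to Wright and made precise by \citet{sullivant2010trek}): in a structural equation model that is linear and acyclic on the relevant portion of $\mathcal{G}$, the covariance of two nodes equals the sum, over all treks between them, of each trek's weight, where the weight of a trek $t$ is the product of the edge coefficients along $t$ times the variance of its unique root node. First I would record that the hypothesis ``linear acyclic below $S_A$ and $S_B$ for $A$ and $B$'' is exactly the condition under which this path-tracing identity is valid along every directed segment running from a separating node down to a node of $A$ or $B$; the behavior of the model above the separating sets will be absorbed into unconstrained factors and never needs to be linear or acyclic.

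Next I would use t-separation to organize the trek sum for each entry $(\Sigma_{A,B})_{a,b} = \mathrm{cov}(a,b)$. Because $(S_A;S_B)$ t-separates $A$ from $B$, every trek $t$ from $a\in A$ to $b\in B$ is blocked on the $A$-side by some node of $S_A$ or on the $B$-side by some node of $S_B$. To count each trek exactly once, I would split the treks into two disjoint classes: those blocked on the $A$-side (class $\mathcal{A}$), and those \emph{not} blocked on the $A$-side but blocked on the $B$-side (class $\mathcal{B}$). Writing $\Sigma^{(\mathcal{A})}$ and $\Sigma^{(\mathcal{B})}$ for the corresponding partial trek sums, t-separation guarantees $\Sigma_{A,B} = \Sigma^{(\mathcal{A})} + \Sigma^{(\mathcal{B})}$.

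The heart of the argument is to show each piece has small rank by cutting its treks at a canonical separating node. For a class-$\mathcal{A}$ trek from $a$ to $b$, let $s\in S_A$ be the separating node closest to $a$ along the $A$-side (the last $S_A$-node before $a$). Cutting $t$ at $s$ splits its weight multiplicatively into the weight of the downstream directed path $s\to a$ (whose interior avoids $S_A$) and the weight of the residual trek from $s$ to $b$. Summing over all class-$\mathcal{A}$ treks then exhibits $\Sigma^{(\mathcal{A})}_{a,b} = \sum_{s\in S_A} L_{s,a}\, M_{s,b}$, i.e. $\Sigma^{(\mathcal{A})} = L^{\top} M$, where $L$ is the $|S_A|\times|A|$ matrix of downstream path-weights from $S_A$ to $A$ and $M$ is an $|S_A|\times|B|$ matrix collecting the residual trek sums (equal to the block $\Sigma_{S_A,B}$ when the whole model is linear and acyclic). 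Since the inner dimension is $|S_A|$, $\mathrm{rank}(\Sigma^{(\mathcal{A})})\le|S_A|$. Symmetrically, cutting each class-$\mathcal{B}$ trek at the $S_B$-node closest to $b$ gives $\Sigma^{(\mathcal{B})} = G R$ with inner dimension $|S_B|$, so $\mathrm{rank}(\Sigma^{(\mathcal{B})})\le|S_B|$. Subadditivity of rank then yields $\mathrm{rank}(\Sigma_{A,B}) \le \mathrm{rank}(\Sigma^{(\mathcal{A})}) + \mathrm{rank}(\Sigma^{(\mathcal{B})}) \le |S_A| + |S_B|$, which is the claim.

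I expect the main obstacle to be the bookkeeping that makes the cutting step a genuine weight-preserving bijection rather than an over- or under-count. Two points need care: (i) choosing the \emph{last} separating node before the endpoint, so that the downstream segment is interior-disjoint from the separating set and the decomposition of each trek is unique; and (ii) verifying that imposing the extra ``not blocked on the $A$-side'' restriction on class $\mathcal{B}$ does not disturb the factorization through $S_B$, since that restriction constrains only the upstream $a\to s'$ portion of the trek, which lands entirely inside the unconstrained factor $G$ and leaves the downstream $S_B\to B$ path-weight factor $R$ intact. Finally, I would note that because $L$ and $R$ reference only directed segments from the separating sets down to $A$ and $B$, the linear-acyclic-below hypothesis is invoked exactly where needed, while any non-linear or cyclic structure above the separating sets is harmlessly hidden inside $M$ and $G$, whose precise entries are immaterial to the rank bound.
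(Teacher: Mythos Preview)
The paper does not prove Theorem~\ref{thm:PETS1}; it is quoted as an external result from \citet{spirtes2013calculation} (extending \citet{sullivant2010trek}) and invoked only as a black box in the proofs of Lemmas~\ref{LemmaAll} and~\ref{LemmaOnly}. There is therefore no in-paper proof to compare your proposal against.

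For what it is worth, your trek-rule factorization with a canonical cut at the last separating node and rank subadditivity is essentially the standard Sullivant--Talaska--Draisma argument in the fully linear acyclic setting, and the bookkeeping you flag in points (i) and (ii) is exactly what makes that bijection clean. Your closing remark---that any nonlinear or cyclic structure above the separating sets is absorbed into the unconstrained factors $M$ and $G$---correctly identifies where the ``linear acyclic below'' hypothesis does its work. The one place I would push back is your opening step: you invoke the global trek rule to write $\mathrm{cov}(a,b)$ as a sum over all treks, but that identity is not available when the model is nonlinear or cyclic above the separating sets. What is available is that each $a\in A$ is a linear function of the variables in $S_A$ plus noise independent of $B$ (and symmetrically for $b$), which yields the factorization directly without ever summing over global treks; this is the substantive content of \citet{spirtes2013calculation} rather than an immediate consequence of path tracing, and your sketch would need to be rerouted through that observation to handle the stated generality.
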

	
	\begin{theorem}{Extended Trek Separation Theorem 2 (Spirtes 2013).}\label{thm:PETS2}
		For all directed graphs $\mathcal{G}$, if there does not exist a pair of sets $S_A$, $S_B$, such that $(S_A; S_B)$ t-separates $A$ and $B$ and $|S_A| + |S_B| \leq r$, then for any $S_A$, $S_B$ there is a fixed parameter structural equation model $S$ with path diagram $\mathcal{G}$ that is linear acyclic below the sets $(S_A; S_B)$ for $A$ and $B$ that entails $rank(\Sigma_{A,B}) > r$.
	\end{theorem}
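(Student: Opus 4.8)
The statement is the realizability (converse) half of the trek-separation rank formula, complementary to Theorem~\ref{thm:PETS1}: whereas Theorem~\ref{thm:PETS1} shows that a small t-separating system forces an upper bound on $\mathrm{rank}(\Sigma_{A,B})$, here I must show that the \emph{absence} of any small t-separating system forces the existence of a parameterization that achieves rank strictly greater than $r$. The plan is to reduce the claim to a statement about the \emph{generic} rank of $\Sigma_{A,B}$, regarded as a matrix whose entries are polynomials in the model's free edge coefficients and root variances, and then to combine (i) a combinatorial formula for this generic rank, (ii) a max-flow--min-cut duality for treks, and (iii) a standard genericity argument that passes from generic rank to a single fixed-parameter model.

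First I would expand each covariance entry via the trek rule for linear SEMs: for $a\in A$ and $b\in B$, $\mathrm{cov}(a,b)=\sum_{\tau}\pi(\tau)$, the sum ranging over treks $\tau$ from $a$ to $b$, where $\pi(\tau)$ is the product of edge coefficients along $\tau$ times the noise variance at the root of $\tau$. This exhibits every entry of $\Sigma_{A,B}$ as a polynomial in the parameters. Next I would apply the Cauchy--Binet formula to the determinant of an arbitrary $(r+1)\times(r+1)$ submatrix $\Sigma_{A',B'}$, obtaining a signed sum over systems of $r+1$ treks joining $A'$ to $B'$. The crucial combinatorial lemma---the Lindstr\"om--Gessel--Viennot argument adapted to treks, following Sullivant--Talaska--Draisma---constructs a sign-reversing involution that pairwise cancels every trek system in which two $A$-side halves share a vertex, or two $B$-side halves share a vertex. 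The surviving contributions come exactly from trek systems \emph{with no sided intersection}, and one shows these do not cancel against one another, so $\det\Sigma_{A',B'}$ is a nonzero polynomial if and only if such a non-side-intersecting system of size $r+1$ exists. Hence the generic rank of $\Sigma_{A,B}$ equals the maximum size of a trek system from $A$ to $B$ having no sided intersection.

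With this formula established I would invoke the Menger-type duality for treks: the maximum size of a trek system from $A$ to $B$ with no sided intersection equals the minimum of $|S_A|+|S_B|$ over all systems $(S_A;S_B)$ that t-separate $A$ and $B$. The hypothesis of the theorem is that no t-separating system satisfies $|S_A|+|S_B|\le r$, so every t-separator has size at least $r+1$; by duality there is a non-side-intersecting trek system of size $r+1$, and by the previous step some $(r+1)\times(r+1)$ minor of $\Sigma_{A,B}$ is a nonzero polynomial, i.e., the generic rank exceeds $r$. Finally, the set of parameters at which $\mathrm{rank}(\Sigma_{A,B})\le r$ is the common zero locus of all $(r+1)\times(r+1)$ minors, a Zariski-closed set that is \emph{proper} precisely because at least one of those minors is not identically zero; its complement is therefore dense and of full Lebesgue measure, so a fixed-parameter model $S$ with $\mathrm{rank}(\Sigma_{A,B})>r$ exists, and in fact almost every parameterization works. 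The relaxation to models that are only linear and acyclic \emph{below} the separating sets is accommodated by noting that the trek-rule expansion and the determinant argument use only the coefficients along the trek segments downstream of the cut; whatever structure lies above the cut merely contributes multiplicative constants that factor out of each entry, so the genericity argument still applies to the remaining free downstream parameters.

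The main obstacle is the no-cancellation step. Unlike the classical Lindstr\"om--Gessel--Viennot lemma for families of directed paths, a trek is a two-sided object meeting at a root, and two distinct treks can agree on long subwalks; verifying that the sign-reversing involution leaves precisely the non-side-intersecting systems, and that the surviving monomials really fail to cancel so that the determinant is a genuinely nonzero polynomial, is the delicate technical core. A secondary difficulty is establishing the Menger-type duality in the correct two-sided form: because a t-separating system is an ordered pair $(S_A;S_B)$ that may block a trek on either side of its root, the duality is not the ordinary vertex-connectivity statement but the trek-specific version, and its min-cut side must match exactly the quantity $|S_A|+|S_B|$ appearing in the theorem.
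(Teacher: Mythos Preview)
The paper does not prove this statement at all: Theorem~\ref{thm:PETS2} is quoted verbatim from \citet{spirtes2013calculation} as an external result that the authors \emph{use} in the proof of their own Theorem~\ref{DNCTtest_thm} (via Lemma~\ref{LemmaOnly}), but never establish themselves. So there is no ``paper's own proof'' to compare your proposal against.

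That said, your proposal is a faithful reconstruction of the argument in the original sources. The trek-rule expansion, Cauchy--Binet, the LGV-style sign-reversing involution on sided intersections, the Menger-type duality between minimum t-separating size and maximum non-sided-intersecting trek system, and the Zariski-genericity step are exactly the ingredients of the Sullivant--Talaska--Draisma proof, with Spirtes' relaxation to ``linear acyclic below $(S_A;S_B)$'' handled as you indicate. Your identification of the no-cancellation step and the two-sided Menger duality as the two nontrivial points is accurate. If your goal was to supply a proof the paper omits, the outline is sound; if your goal was to match the paper, be aware the paper simply cites the result and moves on.
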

	
	By making Assumption~\ref{ass:SNCMlinear}, we can directly apply the above theorems and use them in our proof of correctness. In particular, we statistically test whether certain subcovariance matrices are rank deficient, that is, if their rank is less than their dimension. For our application to the simple NC models, we will consider subcovariance matrices of the measured variables with dimension two. The presence of $U$ ensures that there is always one trek between any two sets of variables $A$ and $B$, and the rank of their subcovariance matrix will be at least one. If any other trek connects $A$ to $B$ and does not pass through $U$, it will force $S_A$ or $S_B$ to contain an additional variable, increasing their collective size to two. Since two is the dimension of the subcovariance matrix, it would not be rank deficient, and the determinant of the subcovariance matrix would not be 0. In this way, these tetrad tests check to see if there is any trek other than the ones through $U$ that would connect a member on one side of the tetrad to a member on the other side of the tetrad.

\section{The Find Negative Controls Algorithm}
\label{FNCalgo}	
	The Find Negative Controls (FindNC) algorithm, summarized in Algorithm~\ref{FNC pseudocode}, searches through a provided set of candidate NCs to identify triplets of candidate NCs that form a simple NC model along with the provided treatment $T$  and outcome $O$. It uses the DNCT validation test introduced in Section~\ref{sec:DNCTtest}, and performs a brute force search through the space of all candidate NC triplets. It outputs a collection of all the candidate NC triplets that passed the DNCT test, for a given $T$, $O$, and data set. 
	
	The FindNC algorithm takes the same input as the DNCT validation test, except that instead of investigating a set of three candidate NCs, the FindNC algorithm takes the set of all candidate NCs as input to identify DNCTs. The FindNC algorithm has one hyperparameter, \emph{alpha}, which is the threshold used for rejecting the null hypothesis in the six vanishing tetrad tests each time it is applied. In our implementation of the FindNC algorithm, the value for this hyperparameter is optional, and it will default to a value of $n^{-1}$, where $n$ is the provided data set's sample size. Numerical experiments have indicated that this is a reasonable heuristic for choosing {alpha}, with consistently well-balanced performance at a large range of sample sizes.

	The FindNC algorithm is a brute force search of the space of candidate NC triplets, using the DNCT validation test. Its correctness thus depends on the completeness of its search, and the correctness of the DNCT validation test. It is clear from its construction that FindNC checks all possible candidate NC triplets, so its search is complete. The correctness of the DNCT validation test was already proven (see Theorem \ref{DNCTtest_thm}). As such, FindNC is correct under the same conditions as the DNCT validation test.
	
	\begin{algorithm}
		\DontPrintSemicolon
		\KwData{$Data$ on a set of Candidate NCs (denoted as $CandidateNCs$), treatment $T$, and outcome $O$, the $VanTetrad$ function, the DNCT validation test introduced in Section~\ref{sec:DNCTtest} (denoted as $DNCTvalidation$), and the threshold parameter $alpha$}
		\KwResult{A collection of validated DNCTs}
		$Output \leftarrow \varnothing$\;
		\For{$X,Y,Z \in CandidateNCs$, and $X\neq Y\neq Z$}{
			\If{$DNCTvalidation$(Data,X,Y,Z,T,O,alpha,VanTetrad) returns TRUE}{
				$Output \leftarrow Output \cup \{X,Y,Z\}$
			}
		}
		\KwRet{Output}
		\caption{Find Negative Controls (FindNC)
		}
		\label{FNC pseudocode}
	\end{algorithm}

	This brute force search scales approximately $|V|^3$ for data with a set of variables $V$, since it scans through all possible triplets of variables. $|V|^3$ is computationally feasible for many real world domains where NCs might be used, unlike causal discovery which searches a space that scales super exponentially with $|V|$. This obviates the immediate need for more complex and efficient search procedures, and we leave the development of such procedures for future work.

	\section{Data-driven Automated Negative Control Estimation}
	\label{DNCcausalinference}
	
	In this section we combine the FindNC algorithm with
	NC estimation of the causal effect of treatment on outcome. 
	We first briefly review the literature on identification and estimation of the ATE with a double-NC pair of $Z$ and $W$. Then we present methods to aggregate information from multiple NC pairs. 
	\subsection{A Brief Review of Double-Negative Control Methods\label{subsec:doublenc}} 
\cite{miao2018identifying} proved that, under a completeness assumption to ensure that $Z$ is sufficiently informative about $U$, 
the ATE is identified by
\[
\Delta = \int_{-\infty}^{\infty}\left\{h(w,t=1)-h(w,t=0)\right\}f(w)dw
\]
where $h(w,t)$ is any solution to 
\begin{equation*}\label{eq:bridge}
	E[O\mid T=t,Z=z] = \int_{-\infty}^{\infty}h(w,t)f(w\mid z,t)dw.
\end{equation*}
The function $h(w,t)$ has been referred to as the outcome confounding bridge function \citep{miao2018confounding}.
The NC framework has been extended to proximal causal inference with an alternative identification via the treatment confounding bridge \citep{tchetgen2020introduction,cui2020semiparametric}. The outcome and treatment confounding bridge functions are analogous to the outcome regression and propensity score models, respectively, in the classical setting when all confounders are measured. In the following, we focus on the outcome confounding bridge function, as similar estimation strategies can be applied to the treatment confounding bridge function. 

In practice, one could specify a parametric model $h(W,T;\alpha)$ and jointly estimate $\theta=(\alpha,\Delta)$ by generalized method of moments (GMM) \citep{miao2018confounding,tchetgen2020introduction} with the following moment restrictions
\begin{equation}\label{eq:moment}
	g(O,T,Z,W;\theta) = \begin{bmatrix}
		q(1,Z,T)\{O-h(W,T;\alpha)\}\\
		\Delta - \left\{h(W,T=1)-h(W,T=0)\right\}
	\end{bmatrix},
\end{equation}
where $q$ is a vector function of the same dimension as $\alpha$. Specifying an identity weighting matrix, the GMM estimator for $\theta$ is 
\begin{equation}\label{eq:estimator}
	\widehat{\theta} = \arg\min_{\theta} \overline{g}^{\top}\overline{g},
\end{equation}
where $\overline{g}=n^{-1}\sum_{i=1}^n g(O_i,T_i,Z_i,W_i;\theta)$ is the average over a sample of $n$ observations. Asymptotic variance can also be estimated \citep{white1980heteroskedasticity,hansen1982large,10.2307/1913610}. 
The above results and methods are still valid with measured confounders. Specifically, to adjust for measured confounders, denoted as $X$, we solve for moment condition $E[g(O,T,Z,W,X;\theta)]=0$ where
\begin{equation}\label{eq:moment_x}
	g(O,T,Z,W,X;\theta) = \begin{bmatrix}
		q(1,Z,T,X)\{O-h(W,T,X;\alpha)\}\\
		\Delta - \left\{h(W,T=1,X)-h(W,T=0,X)\right\}
	\end{bmatrix}.
\end{equation}

In the special case where each child-parent family in the graph $\mathcal{G}$ represents a linear SEM, we have
\begin{equation*}\label{eq:h_linearSEM}
	h(W,T;\theta)=\alpha_0+\alpha_1 W+\Delta T,
\end{equation*}
where $\theta=(\alpha_0,\alpha_1,\Delta)^\top$, and correspondingly,
\begin{equation*}\label{eq:moment_linearSEM}
	g(O,T,Z,W;\theta) = (1,Z,T)^\top\{O-h(W,T;\theta)\}.
\end{equation*}
Therefore the moment condition $E[g(O,T,Z,W;\theta)]=0$ can be solved by two-stage least squares \citep{angrist1995two,wooldridge2010econometric,tchetgen2020introduction}.
In fact, under linear SEM, \cite{kuroki2014measurement} showed that the causal effect identified from a pair of disconnected NCs has a closed form solution which is given by
\begin{equation*}
    \begin{split}
	\Delta &=\frac{
		cov(T,O)cov(Z,W)-cov(Z,O)cov(T,W)
	}{
		cov(T,T)cov(Z,W)-cov(T,Z)cov(T,W)
	}\\
	&=
	\frac{
		cov(T,O)cov(Z,W)-cov(W,O)cov(T,Z)
	}{
		cov(T,T)cov(Z,W)-cov(T,Z)cov(T,W)
	},
\label{eq:ate_pearl}
\end{split}
\end{equation*}
% \begin{equation}
% 	\Delta =\frac{
% 		\sigma_{to}\sigma_{zw}-\sigma_{oz}\sigma_{tw}
% 	}{
% 		\sigma_{tt}\sigma_{zw}-\sigma_{tz}\sigma_{tw}
% 	}
% 	=
% 	\frac{
% 		\sigma_{to}\sigma_{zw}-\sigma_{ow}\sigma_{tz}
% 	}{
% 		\sigma_{tt}\sigma_{zw}-\sigma_{tz}\sigma_{tw}
% 	},
% \label{eq:ate_pearl}
% \end{equation}
%where $\sigma_{ab}=cov(X,Y)$ for a pair of random variables $X,Y$, and
where the second equality is due to the vanishing tetrad $cov(Z,O)cov(T,W)- cov(W,O)\newline  cov(T,Z)  =0$. 
%One can estimate the ATE by plugging the sample covariance into~\eqref{eq:ate_pearl}.

\subsection{Aggregate Multiple Double-Negative Controls to Estimate the ATE}\label{sec:aggregate}
The FindNC algorithm outputs one or more validated DNCTs. For one triplet, one can define six possible double-NC pairs each generating a distinct ATE estimate, while for multiple triplets, there could be overlapping NC pairs. In this section, we propose methods to aggregate all possible pairs of double-NCs to estimate the ATE assuming an outcome confounding bridge function $h(W,T,X;\alpha)$ defined in~\eqref{eq:moment_x}. 

It is important to note that $\alpha$ in the outcome confounding bridge function $h(W,T,X;\alpha)$ is the same regardless of what $Z$ is, and $\Delta$ is the same regardless of what the outcome confounding bridge function is. Therefore, for different NC pairs, $\alpha$ may be shared and $\Delta$ must be shared. Jointly estimating all parameters while acknowledging that certain parameters are shared by different NC pairs can improve efficiency.
To do so, one could stack all moment restrictions $g(O,T,Z,W;\theta)$ each corresponding to a double-NC pair, then estimate the unique parameters via GMM. 
We provide details about this approach in Appendix~\ref{appendix:ideal_est}. 

In practice, such a joint estimation method can be computationally challenging due to jointly estimating an excessive number of parameters. 
We thus propose two practical approaches to aggregate information from multiple DNCTs selected by the FindNC algorithm. The first is a majority vote method: we take the most frequently selected pair of NC variables among all triplets, then follow Section~\ref{subsec:doublenc} to estimate a single ATE. This majority vote method is convenient but does not fully utilize all available information. 

The second method is to estimate the ATE based on each double-NC pair, and then take a weighted average, where the weights are proportional to the frequency that each NC pair is selected. Asymptotic variance of the weighted average can be estimated by nonparametric bootstrapping. Alternatively, one could aggregate the moment restrictions (\ref{eq:moment}) from all double-NCs and then compute a sandwich variance estimator that takes into account the correlation between different ATE estimates obtained from different NC pairs. We summarize this method in Algorithm~\ref{ATE estimation}, referred to as the Aggregated Negative Control Estimation. 
	
	\begin{algorithm}[H]
		\DontPrintSemicolon
		\KwData{$Data$, DNCTs, treatment $T$, and outcome $O$}
		\KwResult{A point estimate $\widehat{\Delta}$ and confidence interval (CI) for the effect of $T$ on $O$ aggregated from all NC pairs in the DNCTs}
		$Estimates \leftarrow \varnothing$\;
		\For {$(NC1,NC2,NC3) \in DNCTs$}{
			\For {each pair $(NC,NC') \in (NC1,NC2,NC3)$}{
				Let $Z=NC$, $W=NC'$, solve for $\widehat{\Delta}$ from (\ref{eq:estimator})\;
				$Estimates \leftarrow Estimates \cup \{\widehat{\Delta}\}$\;
				Let $Z=NC'$, $W=NC$, solve for $\widehat{\Delta}$ from (\ref{eq:estimator})\;
				$Estimates \leftarrow Estimates \cup \{\widehat{\Delta}\}$\;
			}
		}
	Aggregate point estimates $\widehat{\Delta}\leftarrow$ weighted average of \textit{Estimates}\;
	Compute confidence interval via either nonparametric bootstrapping or GMM with aggregated moment restrictions detailed in Appendix~\ref{sec:variance}\;
		\KwRet{$\langle \widehat{\Delta}$, CI$\rangle$}\;
		\caption{Aggregated Negative Control Estimation}
		\label{ATE estimation}
	\end{algorithm}
		
\subsection{Data-driven Automated Negative Control Estimation}
Finally, we combine the FindNC algorithm (Algorithm~\ref{FNC pseudocode}) and the Aggregated Negative Control Estimation algorithm (Algorithm~\ref{ATE estimation}) into the DANCE algorithm, summarized in Algorithm~\ref{DANCE pseudocode}.
	DANCE first identifies subsets of a set of candidate NC variables that satisfy the assumptions of disconnected NCs, then applies the double-NC
	 method to each pair of these validated NC variables, and finally aggregates information from the outputs to produce an unbiased point estimate and confidence interval for the causal effect of interest in the presence of an unmeasured confounder. 
\begin{algorithm}
		\DontPrintSemicolon\KwData{$Data$ on a set of Candidate NCs (denoted as $CandidateNCs$), treatment $T$, and outcome $O$, the $VanTetrad$ function, the FindNC algorithm, the Aggregated Negative Control Estimation (AggregatedNCE) algorithm and the threshold parameter $alpha$}
\KwResult{A point estimate $\widehat{\Delta}$ and confidence interval (CI) for the effect of $T$ on $O$}
DNCTs $\leftarrow$ $FindNC(Data,CandidateNCs,T,O,VanTetrad,alpha$)\;
		$\langle\widehat{\Delta}, CI\rangle\leftarrow$ $AggregatedNCE(Data,DNCTs,T,O,\theta)$\;
\KwRet{$\langle\widehat{\Delta}, CI\rangle$}
		\caption{Data-driven Automated Negative Control Estimation (DANCE)}
		\label{DANCE pseudocode}
	\end{algorithm}

	\section{Simulation Study}
	\label{simulations}
	We perform simulation studies to evaluate the performance of the DANCE algorithm in detecting NCs and estimating the ATE.
	In these studies, data are generated based on linear SEMs under two graphical structures: a simple graph and a complex graph, as presented in Figure~\ref{fig:Graph1}, in which red arrows may lead to violation of the DNCT assumptions for certain NC candidate variables. We first simulate an unmeasured confounder, then all the other variables are generated based on linear SEMs, with edge strengths randomly sampled from uniform distributions. Details about the model parameters and data generating mechanisms are detailed in Appendix \ref{appendix:datagenerate}. 
	We consider various sample sizes ranging from 10 to 10,000. Simulation results are summarized over 200 replications.
	\begin{figure}[h]	
\centering
\begin{subfigure}[b]{0.4\textwidth}
\centering
\begin{tikzpicture}
\node[est] (A) at (0.5,-1.5) {T};
\node[est] (Y) at (3.5,-1.5) {O};
\node[shade] (U) at (2,0.5-0.3) {U};
\node[est] (Z1) at (-0,0.5-0.7) {Z1};
\node[est] (Z2) at (-0,0.5+1) {Z2};
\node[est] (Z3) at (4,0.5-0.7) {Z3};
\node[est] (Z4) at (4,0.5+1) {Z4};
\foreach \from/\to in {A/Y,U/A,U/Y,U/Z1,U/Z2,U/Z3,U/Z4}\path[->,pil,black] (\from) edgenode {} (\to);
\foreach \from/\to in {Z1/Z2}\path[->,pil,red] (\from) edgenode {} (\to);
\end{tikzpicture} 
\caption{Simple Graph}
\label{fig:SimGraph1}
\end{subfigure}
\begin{subfigure}[b]{0.4\textwidth}
\centering
\begin{tikzpicture}
\node[est] (A) at (0.5,-1.5) {T};
\node[est] (Y) at (3.5,-1.5) {O};
\node[shade] (U) at (2,0.5-0.8) {U};
\node[est] (Z1) at (-0,0.5-1.0) {Z1};
\node[est] (Z2) at (-0,0.5+0.3) {Z2};
\node[est] (Z3) at (0.9,0.5+0.7) {Z3};
\node[est] (Z4) at (2.0,0.5+1.4) {Z4};
\node[est] (Z5) at (3.1,0.5+0.7) {Z5};
\node[est] (Z6) at (4,0.5-1.0) {Z6};
\node[est] (Z7) at (4,0.5+0.3) {Z7};
\foreach \from/\to in {A/Y,U/A,U/Y,U/Z1,U/Z2,U/Z3,U/Z4,U/Z5,U/Z6,U/Z7}\path[->,pil,black] (\from) edgenode {} (\to);
\foreach \from/\to in {Z1/Z2,Z3/Z4,Z4/Z5,Z3/Z5,Z6/Z7}\path[->,pil,red] (\from) edgenode {} (\to);
\end{tikzpicture} 
\caption{Complex Graph}
\label{fig:SimGraph2}
\end{subfigure}
\caption{\label{fig:Graph1} The two different configurations used in the simulation study with a treatment $T$, an outcome $O$, an unmeasured confounder $U$, and multiple candidate NCs $Z_1,\dots,Z_7$. Due to the existence of red arrows, some candidate NC triplets are not DNCTs. }
\end{figure}
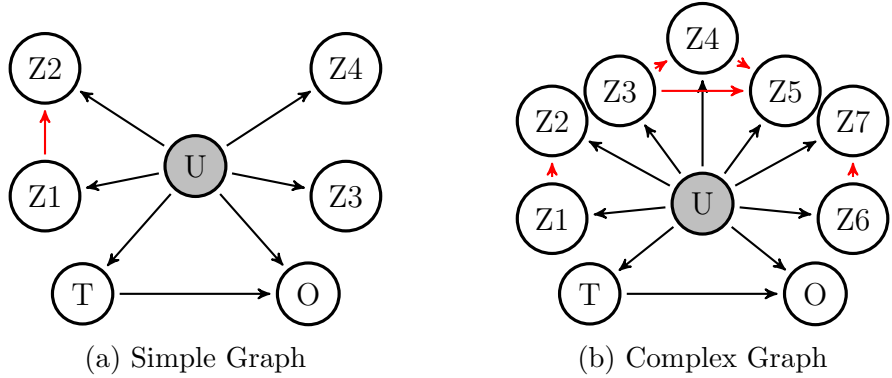
	
	With data generated from either simple or complex graphical structure, we conduct the following two evaluations.
	To assess the algorithm's ability to validate candidate NC variables, we plot ROC curves under varying thresholds used for rejecting the null hypothesis in the vanishing tetrad test. 
	To assess the algorithm's accuracy in estimating the causal effect, we compute proportion bias, variance, and coverage probability comparing DANCE algorithm with the following two methods: 
	\begin{itemize}\itemsep -3pt
	    \item[1.] \emph{Naive}, a naive regression method that ignores unmeasured confounding;
	    \item[2.] \emph{Random}, a random search method in which a pair of NCs is randomly selected from the candidate NCs to adjust for unmeasured confounding via the double-NC method.
	\end{itemize}

\begin{figure}[p]
	\centering
\begin{subfigure}[b]{0.45\textwidth}
\centering\includegraphics[width=7.5cm,height=6cm]{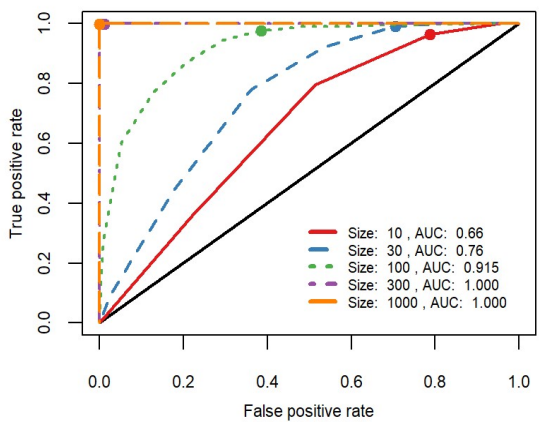} 
\caption{{\small ROC curve under simple graph}\label{fig:ROC_simp}} 
\end{subfigure}
\hspace{0.2in}
\begin{subfigure}[b]{0.45\textwidth}
\centering\includegraphics[width=7.5cm,height=6cm]{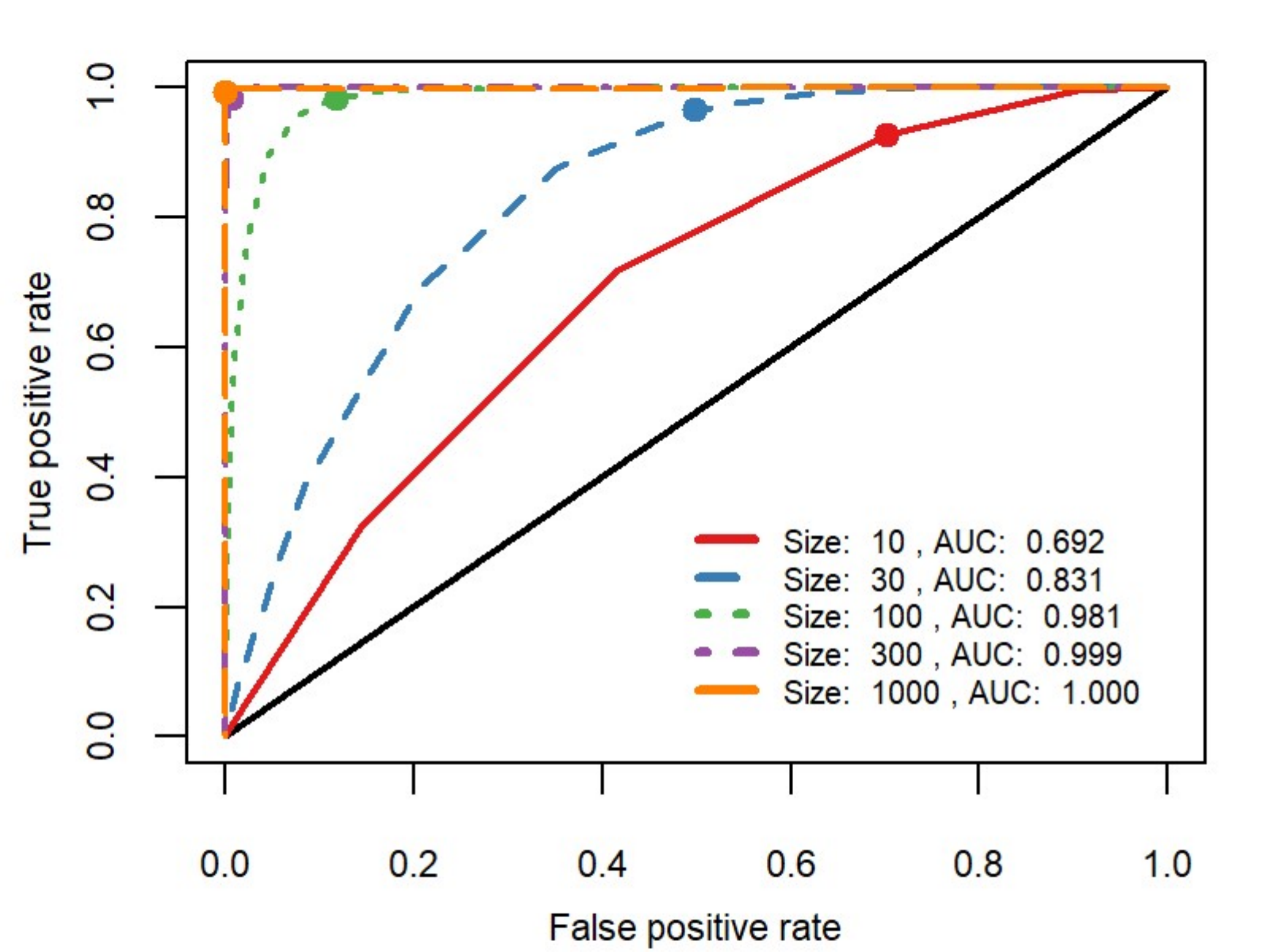} 
\caption{{\small ROC curve under complex graph}\label{fig:ROC_comp}} 
\end{subfigure}
\vskip\baselineskip
\begin{subfigure}[b]{0.45\textwidth}
\centering\includegraphics[width=7.5cm,height=6cm]{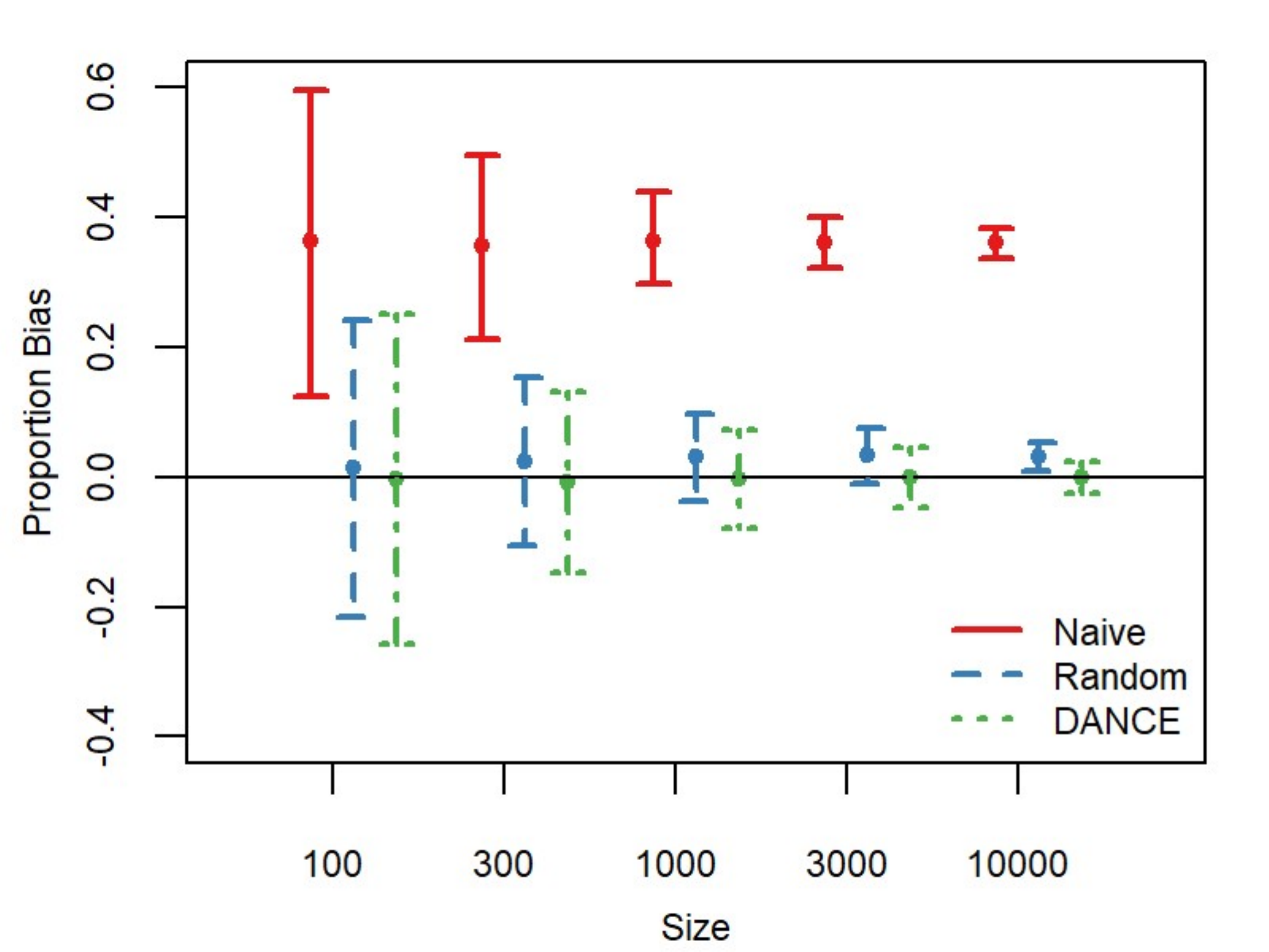}
\caption{{\small Proportion bias under simple graph}\label{fig:bias_simp}} 
\end{subfigure}
\hspace{0.2in}
\begin{subfigure}[b]{0.45\textwidth}
\centering\includegraphics[width=7.5cm,height=6cm]{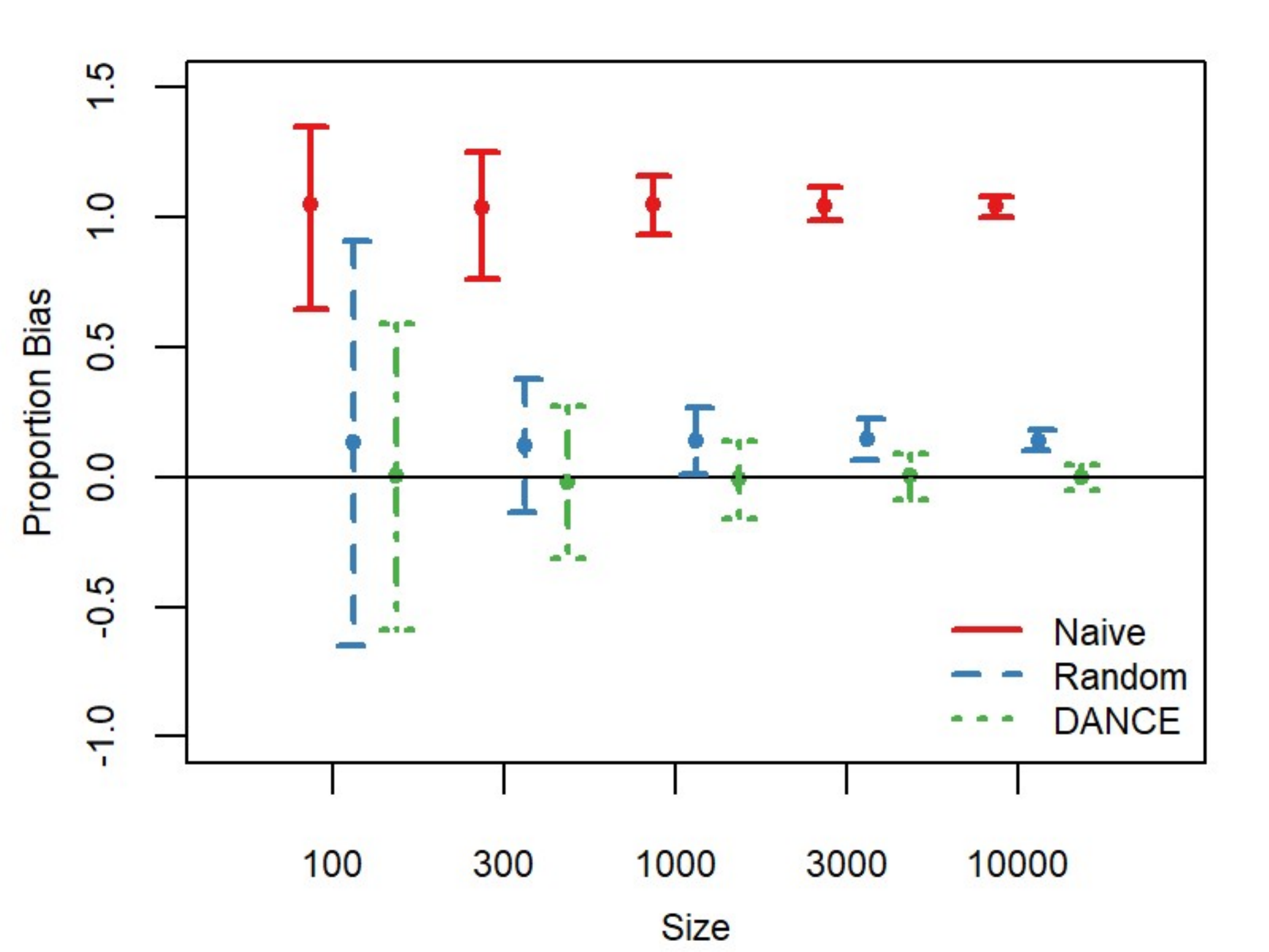}
\caption{{\small Proportion bias under complex graph}\label{fig:bias_comp}} 
\end{subfigure}
		\caption{Simulation results with data generated under simple and complex graphical structures. The first row presents ROC curves for NC validation, and the second row presents proportion of bias for each method in estimating the ATE. ``Naive" stands for naive method without adjusting for unmeasured confounding; ``Random" stands for randomly selecting an NC pair for estimation.} \label{fig:SimresGraph1}
	\end{figure}

	Figure~\ref{fig:SimresGraph1} presents the ROC curves for validation of NCs and the proportion of bias for estimation of the ATE. Under both simple (Figure~\ref{fig:ROC_simp}) and complex (Figure~\ref{fig:ROC_comp}) graphical structures, we observe improved ROC curve with near perfect separation between valid and invalid NCs when the sample size is greater than or equal to 300. 
	In terms of estimation bias, DANCE provides an unbiased estimate with improved precision as sample size increases. In contrast, both the Naive method and the Random method are substantially biased. We also note that the performance of the Random method depends on the proportion of valid NCs in the data. 
	When data are generated under the simple graphical structure as presented in Figure~\ref{fig:SimGraph1}, there are two DNCTs out of four candidate NC triplets: (Z1, Z3, Z4) and (Z2, Z3, Z4), while under the complex graphical structure as presented in Figure~\ref{fig:SimGraph2}, there are 12 DNCTs out of 35 candidate NC triplets. Therefore, the Random method has 50\% and 34\% chance of correctly selecting valid NCs under the simple and complex graph, respectively. Consequently, we observe larger bias in the Random method under complex graphical structure than  simple structure. 
	
	Table~\ref{tab:simu} presents additional information on the performance of ATE estimation based on the DANCE algorithm and the Random method. For both simple and complex graphs, the DANCE algorithm is able to provide much more accurate estimation of the ATE than the Random method, at the price of a slightly larger Monte Carlo standard error due to the search for NCs. In addition, the DANCE algorithm provides valid inference with coverage probability close to the nominal level of 95\%.

		\begin{table}[h]
	    \centering
	    \resizebox{\textwidth}{!}{\begin{tabular}{cccccccccccc}
\hline 
\multirow{2}{*}{Graph} & \multirow{2}{*}{Method} & \multicolumn{2}{c}{Bias} & \multicolumn{2}{c}{Proportion} & \multicolumn{2}{c}{Monte Carlo} & \multicolumn{2}{c}{Estimated} & \multicolumn{2}{c}{95\% CI}\tabularnewline
 &  & \multicolumn{2}{c}{($10^{-3}$)} & \multicolumn{2}{c}{Bias (\%)} & \multicolumn{2}{c}{SE ($10^{-3}$)} & \multicolumn{2}{c}{SE ($10^{-3}$)} & \multicolumn{2}{c}{Coverage}\tabularnewline
\hline 
\multicolumn{2}{c}{Sample size} & 1000 & 3000 & 1000 & 3000 & 1000 & 3000 & 1000 & 3000 & 1000 & 3000\tabularnewline
\hline 
\hline 
\multirow{2}{*}{Simple} & Random & 19.08 & 20.61 & 3.05 & 3.29 & 21.38 & 13.55 & 21.73 & 13.81 & 0.86 & 0.70\tabularnewline
 & DANCE & -1.82 & 0.06 & -0.29 & 0.01 & 23.88 & 14.67 & 22.93 & 14.90 & 0.92 & 0.96\tabularnewline
 &  &  &  &  &  &  &  &  &  &  & \tabularnewline
\multirow{2}{*}{Complex} & Random & 86.55 & 91.52 & 13.82 & 14.61 & 40.59 & 25.04 & 40.14 & 25.10 & 0.39 & 0.05\tabularnewline
 & DANCE & -5.45 & 1.11 & -0.87 & 0.18 & 47.46 & 28.06 & 45.21 & 28.66 & 0.95 & 0.94\tabularnewline
\hline 
\end{tabular}}
	    \caption{Operation Characteristics of the estimator based on randomly selected negative controls (the Random method) and the DANCE estimator for the simple and complex graphs.\label{tab:simu}}
	    \label{tab:my_label}
	\end{table}
	We further investigate the performance of our proposed method under stronger edge strength in the linear SEMs generating the data. In addition,  to assess the sensitivity of the vanishing tetrad test to violations of Gaussian assumptions, we  perform simulations in which all variables follow Bernoulli distributions rather than Gaussian distributions. The simulation results are presented in Appendix \ref{appendix:simul}. We observe similar results under these more challenging settings.

	\section{Demonstration on Real World Data}
	\label{demonstrations}
	We illustrate our proposed methods with an application to the Study to Understand Prognoses and Preferences for Outcomes and Risks of Treatments (SUPPORT) to evaluate the effectiveness of right heart catheterization (RHC) procedure among seriously ill hospitalized adults admitted to the intensive care unit (ICU) \citep{connors1996effectiveness}.
	Many physicians believed that measurements from the RHC procedure can guide therapy  and lead to better outcomes for critically ill patients. Due to its popularity and physicians' strong belief, conducting a clinical trial was unethical. In the absence of an RCT, the SUPPORT team conducted an observational study to evaluate the effectiveness of the RHC procedure. Out of 5,735 critically-ill patients who were considered for the RHC procedure on their admission to an ICU. RHC was performed in 2,184 patients, and the remaining 3,551 patients were managed without RHC. The study found that, contrary to expectation, RHC was associated with increased mortality.
	Due to this controversial result, this data set has been further analyzed by many researchers   \citep{lin1998assessing,tan2006distributional,li2018balancing,mao2020flexible,tchetgen2020introduction}. 
	
	A particular concern has been the potential of hidden bias due to confounding by indication. In this application, we aim to find and apply NC variables to estimate the  causal effect of RHC on 30-day survival defined as the number of days between admission and death or censoring at 30 days, while accounting for potential unmeasured confounding. 
	The SUPPORT study measured an extensive set of 72 covariates including demographics, comorbidity, vital signs, physiological status, and functional status.	
	We applied our DANCE algorithm to all measured covariates to find valid NCs. Our algorithm identified 43 DNCTs which resulted in 164 unique double-NC pairs out of a total of 258 pairs. The most frequently selected pair of NC variables were two comorbidity variables: one is \texttt{dementhx} which stands for dementia, stroke or cerebral infarct, Parkinson’s disease, and the other is \texttt{gibledhx} which stands for upper gastrointestinal (GI) bleeding. We applied the double-NC method introduced in Section~\ref{subsec:doublenc} using this pair while adjusting for the 42 measured covariates not selected into the DNCTs. 
	Direct adjustment of the measured covariates would lead to a large number of parameters to be estimated for each outcome confounding bridge. With 164 unique double-NC pairs, the total number of parameters to be estimated is extremely large. Therefore, instead of directly adjusting for measured covariates, we adjusted for cubic spline basis of the propensity score obtained by regressing the treatment on the 42 measured covariates via logistic regression.
	We also applied the method introduced in Section~\ref{sec:aggregate} to aggregate the estimated RHC effects based on all selected pairs. 
	We further implemented a simple linear regression that regresses the outcome on all measured covariates without any attempt to control for unmeasured confounding.
	
	Table~\ref{tbl:application} presents the estimated effect of RHC on days alive since admission to ICU up to 30 days. Similar to the simple linear regression, our methods estimated that RHC has a negative effect on 30-day survival among adults admitted to the ICU. However, different from the regression adjustment approach, our methods provided a 95\% confidence interval that covers zero, which is evidence of the uncertainty due to potential unmeasured confounding. An interesting observation is that we obtained very similar results when flipping variables (\texttt{dementhx} and \texttt{gibledhx}) allocated to $W$ and $Z$. As discussed in \cite{tchetgen2020introduction}, under the causal graph in Figure~\ref{fig:DAG}, the role of $W$ and $Z$ are equivalent, hence causal inference would remain invariant to the choice of $W$ and $Z$ in the double-NC method. Therefore the similarity in our results is strong evidence that the identified pair of NCs satisfy the causal graph in Figure~\ref{fig:DAG}. In addition, despite the relatively wide confidence intervals from the majority vote method, we were able to provide more precise inference with a narrow confidence interval by aggregating information from all selected NCs.

	\begin{table}[!h]
		\centering
\begin{tabular}{c|c|c}
\hline 
\multicolumn{2}{c|}{Method} & RHC effect on days alive (95\% CI)\tabularnewline
\hline 
\hline 
\multicolumn{2}{c|}{Regression ignoring unmeasured confounding} & -1.29 (-1.83, -0.75)\tabularnewline
\hline
Most frequent & W=\texttt{dementhx}$^\dag$, Z=\texttt{gibledhx}$^\ddag$ & -2.98 (-14.96, 8.99) \tabularnewline
\cline{2-3} \cline{3-3} 
NC pair & W=\texttt{gibledhx}, Z=\texttt{dementhx} & -2.80 (-13.42, 7.83)\tabularnewline
\hline 
\multicolumn{2}{c|}{Aggregate over all NC pairs} & -0.19 (-1.50,  1.12)\tabularnewline
\hline 
\end{tabular}
		\\
		$^\dag$  Dementia, stroke or cerebral infarct, Parkinson’s disease. $^{\ddag}$ Upper GI bleeding.
		\caption{Results from application to the right heart catherization study. 
			}
		\label{tbl:application}
	\end{table}

	\section{Discussion}
	\label{Discussion}
	
	In this paper we introduced a new test for validating candidate NC variables (i.e., the DNCT validation test), proved its correctness, implemented a correct search procedure using it (i.e., the FindNC algorithm), evaluated the search procedure's performance in simulations, and combined this search procedure with a negative control estimation procedure to produce the DANCE method for causal inference, which we evaluated with simulations and demonstrated on a real world data set. DANCE allows for causal inference in the presence of an unmeasured confounder, but does not assume that the user can identify precisely which variables in the data set meet the strict requirements of NCs. It instead asks only for a collection of candidate NC variables. If no subset of variables passes the DNCT validation test, then DANCE does not produce an estimate, but reports that no variables meet the condition. As such, DANCE does not assume that the provided set of variables includes valid NCs, either. 
	
	The methods we provide here still have some limitations. The validation method we provide will throw out some NCs that satisfy a DAG different than Figure~\ref{fig:DAG}. Our correctness proof for the NC validation test assumes continuous variables with linear relationships, while many real world scenarios involve binary or categorical variables. DANCE may still work for binary variables, as shown in our simulation study in Appendix~\ref{appendix:simul}, but this is currently lacking a formal proof. Our correctness proof currently also assumes that the data are generated by a large but still limited set of structures. It is possible to test more general assumptions and more complex structures, and to rule out candidate NCs that do not meet these structural requirements using methods from the causal discovery literature, but at present there is no proof for the correctness of a more complex method.
	
	Future work can improve the computational efficiency of the validation methods presented here, relax some of the assumptions or limitations, or apply similar techniques to different problems. More complex structural scenarios should also be considered, such as when there is more than one unmeasured confounder.

	\clearpage\appendix
	\renewcommand\theassumption{\ref{appendix:thm2}.\arabic{assumption}}
	\begin{center}
		{\bf \Large Supplementary Material for ``Data-driven Automated Negative Control Estimation (DANCE): Search for, Validation of, and Causal Inference with Negative Controls"}
	\end{center}
	
\def\a{A}
\def\b{B}
\def\c{C}

\section{Technical Lemmas and Proof of Theorem~\ref{DNCTtest_thm}}
\label{appendix:technicallemma}
We first present two technical lemmas that are essential to proving  Theorem~\ref{DNCTtest_thm}.
\begin{lemma}
		\label{LemmaAll}
		Let the data be generated by a simple NC model, $\mathcal{G}$.  Under Assumptions~\ref{ass:1} and~\ref{ass:SNCMlinear}, the DNCT validation test will return TRUE for any DNCT in $\mathcal{G}$.
	\end{lemma}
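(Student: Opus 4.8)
The plan is to show that every one of the six vanishing tetrad tests invoked by the DNCT validation test is a test of a tetrad that is structurally entailed to vanish whenever the triplet $\{X,Y,Z\}$ is a genuine DNCT, so under tetrad faithfulness (not even needed for this direction — we only need the ``entailed $\Rightarrow$ vanishes'' half) and Assumption~\ref{ass:SNCMlinear}, all six tests accept and the validation test returns TRUE. Concretely, I would fix a DNCT $\{X,Y,Z\}$ in the simple NC model $\mathcal{G}$ and consider a representative tetrad, say $VanTetrad(\{X,Y\},\{Z,T\})$, which corresponds to the determinant of $\Sigma_{\{X,Y\},\{Z,T\}}$. The goal is to exhibit sets $S_A, S_B$ with $|S_A|+|S_B|\le 1$ that t-separate $A=\{X,Y\}$ from $B=\{Z,T\}$; then Extended Trek Separation Theorem~1 (Theorem~\ref{thm:PETS1}) gives $rank(\Sigma_{A,B})\le 1<2$, so the determinant vanishes and the Wishart test accepts asymptotically.

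The key step is verifying the t-separation claim with $S_A=\{U\}$, $S_B=\emptyset$ (and symmetrically for the other five). I would argue as follows: take any trek $t$ from a node in $A=\{X,Y\}$ to a node in $B=\{Z,T\}$. By Definition~\ref{def:sncm}, in a simple NC model every measured variable other than $T,O$ is a child of $U$ with no other measured parents, and $T,O$ themselves have $U$ as a parent with no other measured variable pointing into them except possibly $T\to O$. Because $\{X,Y,Z\}$ is a DNCT, every trek between two of $X,Y,Z$ passes through $U$; and by the structural restriction on $T$ in Definition~\ref{def:sncm}(3), every trek from $X$ or $Y$ to $T$ must also route through $U$ (there is no measured back-door into $T$ other than via $U$, and $X,Y$ are not ancestors of $T$). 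Hence every such trek $t$ contains $U$. It remains to check that $U$ blocks $t$ on the $A$-side, i.e.\ that $U$ lies on the sub-trek from the root to the $A$-endpoint. Since $X$ and $Y$ are disconnected NCs, $U$ is an ancestor of the $A$-endpoint along $t$ and the only way $t$ reaches that endpoint is downward from $U$; thus the root of $t$ is either $U$ or an ancestor of $U$, and in either case $U$ sits on the $A$-side segment. Therefore $\langle\{U\},\emptyset\rangle$ t-separates $A$ from $B$, and Theorem~\ref{thm:PETS1} applies. The same reasoning, swapping roles, handles the tetrads with $O$ in place of $T$ (using that $O$ likewise has no measured parent other than $U$ and $T$, and no trek from $X,Y$ to $O$ can avoid $U$ since the only measured route into $O$ is through $T$, which is itself only reachable from the NCs via $U$).

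I expect the main obstacle to be the careful bookkeeping in the t-separation argument — in particular, making fully rigorous the claim that no trek from one NC of a DNCT to $T$ (or $O$) can bypass $U$, given only the structural constraints of Definition~\ref{def:sncm} plus the DNCT property on the NCs themselves. One has to rule out ``indirect'' treks that leave an NC, pass through some other candidate NC $Z'$, and then somehow enter $T$; the DNCT hypothesis controls NC-to-NC treks, and Definition~\ref{def:sncm}(3) controls what can point into $T$, but one must chain these together and also confirm that the resulting $U$-containing trek really is blocked on the correct side so that $S_A=\{U\}$ (rather than needing $U\in S_B$) suffices uniformly across all six tetrads. Once that is nailed down, the passage from a structurally entailed rank deficiency to acceptance of the Wishart test is standard (consistency of the vanishing tetrad test under $H_0$), and the conclusion that the validation test returns TRUE is immediate since all six required tests accept.
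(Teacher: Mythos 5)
Your proposal is correct and follows essentially the same route as the paper's proof: both show that, for each of the six tetrads, $U$ lies on every trek between the two sides (NC-to-NC treks by the DNCT property, NC-to-$T$ and NC-to-$O$ treks by Definition~\ref{def:sncm}), take $S_A=\{U\}$, $S_B=\varnothing$, and apply Theorem~\ref{thm:PETS1} to get rank deficiency and hence vanishing determinants, so all six tests accept. The side-placement issue you flag is resolved exactly as the paper does it: since $U$ is the unique latent and has no parents, it is the \emph{root} of every such trek, so it blocks on the $A$-side (or $B$-side) regardless of which side it is assigned to.
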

	\begin{proof}
		Let $\{A,B,C\}$ be an arbitrary DNCT in $\mathcal{G}$. By construction, our DNCT validation test will return TRUE only for candidate negative control triplets in the simple NC model $\mathcal{G}$ for which all 3 vanishing tetrad tests consisting of the members of the triplet and $T$, and all 3 vanishing tetrad tests consisting of the members of the triplet and $O$, do not reject the corresponding null hypothesis that the determinant of subcovariance matrix is 0. We proceed by showing that the null hypotheses investigated by these 6 tests will not be rejected.
		
		Since $\{A,B,C\}$ is an arbitrary DNCT in $\mathcal{G}$, by definition of DNCT $U$ must be on all treks from each of $\{A,B,C\}$ to every variable in $\{A,B,C\}$. By the construction of $\mathcal{G}$ as an simple NC model, $U$ is also on all treks from each member of $\{A,B,C\}$ to $T$ and to $O$. From this point, we can set ourselves up to apply Theorem~\ref{thm:PETS1}. By the construction of the simple NC model, $U$ is not only on all of the aforementioned treks, but it is also the source of those treks. As such, without loss of generality let $U \in S_A$ and let $S_B = \varnothing$ (as compared to the converse). Then for all partitions of $\{A,B,C,T\}$ and $\{A,B,C,O\}$ into sets $S_1$ and $S_2$ with $|S_1|=|S_2|$, $(S_A;S_B)$ t-separates $S_1$ and $S_2$ in $\mathcal{G}$, since $U$ is the source of all treks among every pair of variables in $\{A,B,C,T\}$, and the same for $\{A,B,C,O\}$. Finally, with this setup and Assumptions~\ref{ass:1} and~\ref{ass:SNCMlinear}, we can apply Theorem~\ref{thm:PETS1}, implying that $rank(\Sigma_{S_1,S_2}) \leq |S_A| + |S_B|$. Because $|S_A| + |S_B| = 1$, we have that $\Sigma_{S_1,S_2}$, a 2 by 2 matrix, is rank deficient, and has a determinant of 0. As such, $VanTetrad(S_1,S_2)$ will not reject the null hypothesis for all such sets $S_1$ and $S_2$.
	\end{proof}
	
	\begin{lemma}
		\label{LemmaOnly}
		Let the data be generated by a simple NC model, $\mathcal{G}$. Under Assumptions~\ref{ass:1}---\ref{ass:SNCMlinear}, the DNCT validation test will return FALSE for any set of three candidate negative controls that is not a DNCT in $\mathcal{G}$.
	\end{lemma}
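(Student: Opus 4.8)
The plan is to argue the statement directly. Suppose $\{X,Y,Z\}$ is a set of three candidate NCs in $\mathcal{G}$ that is \emph{not} a DNCT; I want to exhibit one of the six vanishing tetrad tests used by the validation test that must reject its null hypothesis, which forces the test to return FALSE. The tools are the two extended trek-separation theorems together with tetrad faithfulness. For a $2\times 2$ subcovariance matrix $\Sigma_{S_1,S_2}$, Theorem~\ref{thm:PETS1} and Theorem~\ref{thm:PETS2} together say that the structure of $\mathcal{G}$ forces $\det\Sigma_{S_1,S_2}=0$ in every parametrization that is linear acyclic below the separating sets if and only if some pair $(S_A;S_B)$ with $|S_A|+|S_B|\le 1$ t-separates $S_1$ from $S_2$; when no such pair exists, Theorem~\ref{thm:PETS2} produces a parametrization with $rank(\Sigma_{S_1,S_2})=2$, so the corresponding tetrad is not implied to vanish by $\mathcal{G}$, and then Assumption~\ref{ass:2} (tetrad faithfulness) guarantees it does not vanish in the distribution $\mathcal{G}$ induces. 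So the lemma reduces to finding one of the six tetrads for which no t-separator of total size at most one exists; the ``linear acyclic below'' hypotheses are discharged exactly as in the proof of Lemma~\ref{LemmaAll}, from Assumption~\ref{ass:SNCMlinear} and the fact that $U$ is a source in a simple NC model.

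First I would reduce ``not a DNCT'' to a single offending trek. By Definition~\ref{def:sncm}, the only edge into $T$ coming from inside the graph is $U\to T$ (aside from $T\to O$), the only edges into $O$ are $U\to O$ and $T\to O$, and $O$ has no children; tracing any trek from a candidate NC to $T$ or to $O$ then shows that it must pass through $U$. Consequently a candidate-NC triplet can fail to be a DNCT only because there is a trek $\tau$ between two of its members that avoids $U$; relabel so that, without loss of generality, $\tau$ runs from $X$ to $Y$. I then focus on the tetrad that puts $X$ and $Y$ on opposite sides, namely $VanTetrad(\{X,Z\},\{Y,T\})$, the second tetrad test in the list ($VanTetrad(\{X,Z\},\{Y,O\})$, the fifth, works equally well), and aim to show that no pair $(S_A;S_B)$ with $|S_A|+|S_B|\le 1$ t-separates $\{X,Z\}$ from $\{Y,T\}$.

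The core is a short case analysis on the single blocking node $v$; if both $S_A$ and $S_B$ are empty then $\tau$ is unblocked and we are done. Besides $\tau$, the graph contains the treks $X\leftarrow U\to T$ and $Z\leftarrow U\to T$, both because $U$ is a common parent. If $v=U$, it does not lie on $\tau$, so $\tau$ stays unblocked. If $v$ lies on the side containing $\{X,Z\}$ with $v\ne U$, then blocking $Z\leftarrow U\to T$ forces $v=Z$ (the only node other than $U$ on the root-to-$Z$ portion of that trek), but then $X\leftarrow U\to T$ is unblocked because $Z$ is not on its root-to-$X$ portion. If $v$ lies on the side containing $\{Y,T\}$ with $v\ne U$, then blocking $X\leftarrow U\to T$ forces $v=T$, but $T\notin\tau$ (a trek between two candidate NCs cannot use two edges at $T$, since only $U\to T$ and $T\to O$ touch $T$ and the first is excluded from $\tau$), so $\tau$ is unblocked. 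Hence no t-separator of size at most one exists for this tetrad, so by the first paragraph $\det\Sigma_{\{X,Z\},\{Y,T\}}\ne 0$, $VanTetrad(\{X,Z\},\{Y,T\})$ rejects, and the DNCT validation test returns FALSE; running the same argument with $O$ in place of $T$ gives the analogous conclusion via the fifth test.

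I expect the combinatorial case analysis to be the main obstacle, not because it is deep but because it must be done carefully: one must allow for $Z$ (or some other candidate NC) lying somewhere along $\tau$, and for $\mathcal{G}$ containing further treks among the candidate NCs, and verify that these possibilities never supply the missing small t-separator. Additional treks only make t-separation harder, and the three treks $\tau$, $X\leftarrow U\to T$, and $Z\leftarrow U\to T$ already eliminate every placement of a single blocking node, so the argument goes through; but a rigorous write-up requires being precise about ``blocks on the $A$-side versus the $B$-side'' and checking each placement of $v$ against that definition, as well as spelling out the trek-tracing argument of the second paragraph. Once this lemma is established, combining it with Lemma~\ref{LemmaAll} immediately yields Theorem~\ref{DNCTtest_thm}.
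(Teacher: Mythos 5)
Your proposal is correct and follows essentially the same route as the paper's proof: reduce ``not a DNCT'' to an offending trek between two members avoiding $U$, pick the tetrad that places its endpoints on opposite sides (your $\{X,Z\}$ vs.\ $\{Y,T\}$ is exactly the paper's $\{A,C\}$ vs.\ $\{B,T\}$), and invoke Theorems~\ref{thm:PETS1}--\ref{thm:PETS2} with tetrad faithfulness to conclude that the absence of a t-separator of total size at most one forces rejection. Your case analysis on the placement of the single blocking node is organized a bit differently from the paper's split on whether an endpoint of the offending trek lies in the separating sets, but it is equivalent in substance and complete.
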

	\begin{proof}
		Let $\{A,B,C\}$ be an arbitrary triplet of distinct candidate negative controls that is not a DNCT in $\mathcal{G}$. It will suffice to show that for some partition of $\{A,B,C,T\}$ (or $\{A,B,C,O\}$) into sets $S_1$ and $S_2$ with $|S_1|=|S_2|$, $VanTetrad(S_1,S_2)$ will reject the null hypothesis that $det(\Sigma_{S_1,S_2})=0$.
		
		Assumptions~\ref{ass:2} and~\ref{ass:SNCMlinear} together imply that if there is no $S_A$ and $S_B$ that t-separate $S_1$ from $S_2$ such that $|S_A|+|S_B| \leq 1$, then $rank(\Sigma_{S_1,S_2})>1$, and thus $det(\Sigma_{S_1,S_2})\neq 0$. In the large sample limit, $VanTetrad(S_1,S_2)$ would thus reject the null hypothesis that $det(\Sigma_{S_1,S_2})=0$. Therefore, it suffices to show that there is no $S_A$ and $S_B$ that t-separate $S_1$ from $S_2$ such that $|S_A|+|S_B| \leq 1$.
		
		Since $\{A,B,C\}$ is not a DNCT, this implies that there is at least one trek from one member of $\{A,B,C\}$ to another that does not pass through $U$. Without loss of generality, assume this trek goes from $A$ to $B$. Let $S_1=\{A,C\}$ and $S_2 = \{B,T\}$. By construction of the simple NC model, $U$ is the source of at least one trek from $A$ to $B$, with no other variables along that trek.
		
		Let sets $S_A$ and $S_B$ t-separate $S_1$ from $S_2$ with $S_A$ on the $S_1$ side of the t-separating set and $S_B$ on the $S_2$ side of the t-separating set. By excluded middle, either (I) $S_A$ includes $A$ or $S_B$ includes $B$, or (II) $S_A$ does not include $A$ and $S_B$ does not include $B$. We proceed by disjunctive elimination, showing that assuming either (I) or (II) entails that $|S_A|+|S_B| \geq 2$.
		
		First, assume (I). Further, without loss of generality assume that $A\in S_A$ (as compared to $B\in S_B$). This blocks both treks from $A$ to $B$. By the construction of $\mathcal{G}$ as an simple NC model, there is also a trek from $C$ to $T$ that passes only through $U$, so either $C$, $T$, or $U$ need to be included in either $S_A$ or $S_B$. Since $A$ is assumed to be in $S_A$, this entails that $|S_A|+|S_B| \geq 2$.
		
		Second, assume (II). Since $S_A$ and $S_B$ are assumed to t-separate $S_1$ from $S_2$, but $S_A$ does not include $A$, $S_B$ does not include $B$, and the extra trek between $A$ and $B$ does not include $U$, there must be another variable, $Y \not\in \{A,B,U\}$, that blocks the extra trek between $A$ and $B$ when placed into either $S_A$ or $S_B$. However $Y$ does not block the trek from $A$ to $B$ that passes only through $U$, so an additional variable ($U$) must be added to $S_A$ or $S_B$ in order to block that trek as well, thus $|S_A|+|S_B| \geq 2$.
		
		Therefore, by disjunctive elimination, $|S_A|+|S_B| \geq 2$. This implies that there is no $S_A$ and $S_B$ that t-separate $S_1$ from $S_2$ such that $|S_A|+|S_B| \leq 1$, and so the DNCT validation test will return FALSE for $\{A,B,C\}$.
	\end{proof}
	
  \paragraph{Theorem 3.} Let the data be generated by a simple NC model, $\mathcal{G}$, and assume Assumptions~\ref{ass:1}, \ref{ass:2}, and \ref{ass:SNCMlinear}. The DNCT validation test will return TRUE for any DNCT in $\mathcal{G}$, and FALSE otherwise.
	\begin{proof}
		Theorem~\ref{DNCTtest_thm} follows directly from Lemmas \ref{LemmaAll} and \ref{LemmaOnly}.
	\end{proof}

	\section{Joint Estimation and Inference via Aggregated Moment Restrictions}\label{appendix:ideal_est}
	
\subsection{Aggregated Estimation With One or More DNCTs}
We illustrate our method to aggregate multiple double-negative control pairs across one or more validated DNCTs taking one triplet as an example. 
For one triplet, $\{\a, \b, \c\}$, one can define six possible double-negative control pairs each generating a distinct ATE, which are  $\{W = \a, Z = \b\}, \{W = \b, Z = \a\}, \{W = \a, Z = \c\}, \{W = \c, Z = \a\}, \{W = \b, Z = \c\}, \{W = \c, Z = \b\}\}$. 
Correspondingly, we have six moment restrictions and we stack them into the following joint moment restrictions:
\begin{align*}
    g(&O,T,\a,\b,\c,X; \alpha_{\a},\alpha_{\b},\alpha_{\c},\Delta)\\
    =
    &\begin{bmatrix}
    q(1,Z=\b,T,X)\{O-h(W=\a,T,X;\alpha_{\a})\\
    q(1,Z=\c,T,X)\{O-h(W=\a,T,X;\alpha_{\a})\\
    q(1,Z=\a,T,X)\{O-h(W=\b,T,X;\alpha_{\b})\\
    q(1,Z=\c,T,X)\{O-h(W=\b,T,X;\alpha_{\b})\\
    q(1,Z=\a,T,X)\{O-h(W=\c,T,X;\alpha_{\c})\\
    q(1,Z=\b,T,X)\{O-h(W=\c,T,X;\alpha_{\c})\\
    \Delta - \left\{h(W=\a,T=1,X;\alpha_{\a})-h(W=\a,T=0,X;\alpha_{\a})\right\}\\
    \Delta - \left\{h(W=\b,T=1,X;\alpha_{\b})-h(W=\b,T=0,X;\alpha_{\b})\right\}\\
    \Delta - \left\{h(W=\c,T=1,X;\alpha_{\c})-h(W=\c,T=0,X;\alpha_{\c})\right\}\\
    \end{bmatrix},
\end{align*}
where $\alpha_i$, $i\in\{\a,\b,\c\}$ is the same regardless of what $Z$ is, and $\Delta$ is the same regardless of what the outcome confounding bridge function is.
For example, $\a$ is an NCO in two negative control pairs: $\{W = \a, Z = \b\}$ and $\{W = \a, Z = \c\}$. Consequently, there are two moment functions that share the same parameter $\alpha_{\a}$ in the outcome confounding bridge function. With three disconnected negative controls, there are three unique outcome confounding bridge functions $h(W=i,T=1,X;\alpha_i)$, $i\in\{\a,\b,\c\}$ .
In addition, all three confounding bridge functions lead to the same ATE $\Delta$, that is, $\Delta=E[h(W=i,T=1,X;\alpha_i)-h(W=i,T=0,X;\alpha_i)]$, for all $i\in\{\a,\b,\c\}$.
With the aggregated moment restrictions, we can estimate and make statistical inference on $(\alpha_{\a},\alpha_{\b},\alpha_{\c},\Delta)$ simultaneously using GMM. 

With multiple DNCTs, there would be more NCOs each corresponding to an outcome confounding bridge function. As such, the number of $\alpha$ parameters as well as the dimension of parameters will grow rapidly, making it computationally challenging to estimate all parameters jointly. The disadvantage of the joint estimation method in terms of lack of computational efficiency may offset the advantage in terms of improved statistical efficiency.

\subsection{Variance Estimation \label{sec:variance}}
For one triplet, one can define six possible negative control pairs each generating a distinct ATE estimate, while for multiple triplets, there could be overlapping negative control pairs. 

For two DCNTs, if there are overlapping negative control pairs selected,  then we will have different frequency for different negative control pairs. For example, suppose we identified two DCNTs, \{NC1,NC2,NC3\} and \{NC1,NC2,NC4\}, in which NC1 and NC2 appears in both triplets. Then there  are ten rather than twelve unique negative control pairs, and the frequency of negative control pairs \{$W$ = NC1, $Z$ = NC2\} and \{$W$ = NC2, $Z$ = NC1\} is two while the frequency of the rest of the negative control pairs is one.  We take an weighted average of the corresponding ten ATE estimates where the weights are proportional to the frequency. Because the aggregated ATE is a linear combination of the unique ATE estimates from individual negative control pairs, it becomes clear that if we can compute the variance-covariance matrix of the ATEs estimated from the unique negative control pairs, then the variance of the aggregated ATE can be computed. We detail our inference method for the aggregated ATE below.

We first introduce notation. Suppose there are $K$ unique negative control pairs denoted by $\{Z^{k},W^{k}:~k=1,\dots,K\}$. We observe a sample of $n$ observations and for the $k$-th negative control pair we use $\{T_i, O_i, Z_i^{k},W_i^{k}:~i=1,\dots,n\}$ to obtain the $k$-th ATE estimate. Specifically, let $\widehat{\theta}^{k}=(\widehat{\alpha}^{k},\widehat{\Delta}^{k})= \arg\min_{\theta} \overline{g^{k}}^{\top}\overline{g^{k}}$ where 
\[
\overline{g^{k}}  = \overline{g^{k}}(O_i,T_i,W_i^{k},Z_i^{k};\theta)  = \frac{1}{n}\sum_{i=1}^n g(O_i,T_i,W_i^{k},Z_i^{k};\theta)
\]
and $g()$ is given in eq. \eqref{eq:moment}. Then we define the final aggregated ATE as a weighted average of the ATE estimates from the negative control pairs
\[
\widehat{\Delta} = \sum_{k=1}^{K} w_k \widehat{\Delta}^{k},
\]
where the weight $w_k$ is proportional to the frequency of the $k$-th negative control pair. 
%%%
Let $\widehat{\theta} = \left\{(\widehat{\theta}^{1})^\top,\dots,(\widehat{\theta}^{K})^\top\right\}^\top$ and
\begin{equation*}
	\begin{split}
		G_n(\widehat{\theta}) &= \left\{\overline{g^{1}}(O_i,T_i,W_i^{1},Z_i^{1};\widehat{\theta}^{1})^\top, \dots, 
		\overline{g^{K}}(O_i,T_i,W_i^{K},Z_i^{K};\widehat{\theta}^{K})^\top\right\}^\top\\
		A_n(\widehat{\theta}) &= \frac{\partial}{\partial \theta} G_n(\widehat{\theta})	\\
		B_n(\widehat{\theta}) &= n G_n(\widehat{\theta}) G_n(\widehat{\theta}) ^\top,
	\end{split}
\end{equation*}
then we have the following empirical sandwich estimator of the variance-covariance matrix of $\widehat{\theta}$
\[
V(\widehat{\theta}) = A_n(\widehat{\theta}) ^{-1}B_n(\widehat{\theta})  A_n(\widehat{\theta}) ^{-1}.
\]
Note that the aggregated ATE is a linear combination of $\widehat{\theta}$ given by
\[
\widehat{\Delta} =\omega^\top \theta,
\]
where $\omega=(\overrightarrow{0}_1,w_1,\dots,\overrightarrow{0}_K,w_K)^\top$ and
 $\overrightarrow{0}_k$ is a zero vector of the same dimension as $\alpha^{k}$, $k=1,\dots,K$. Therefore, the variance of $\widehat{\Delta}$ is
\[
V(\widehat{\Delta}) = \omega^\top V(\widehat{\theta}) \omega.
\]

\section{Data Generating Mechanisms in Simulation Studies}
\label{appendix:datagenerate}
The data generation procedure of the simulation studies is presented below.

In each scenario, data are generated
based on linear SEMs under a given graphical structure (Figure~\ref{fig:Graph1}). 
Specifically, in the Gaussian graphical model scenario, we first simulate an unmeasured confounder $U$ following a $Normal(0, 2)$ distribution. Then all the other variables are generated based on linear SEMs with exogenous error terms following a $Normal(0, 1)$ distribution. 
The coefficients in the SEMs were randomly simulated from uniform distributions with certain parameters as detailed in Table~\ref{tab:datgen}, and then fixed over all iterations. We generate edges that potentially lead to violation of the disconnected negative control assumptions, i.e., edges between negative controls highlighted in red color in Figure~\ref{fig:Graph1}, such as the $Z_1 \rightarrow Z_2$ in Figures~\ref{fig:SimGraph1} and \ref{fig:SimGraph2}.
We make the strength of such edges stronger by generating the coefficient of negative controls from the uniform distribution with larger parameters than the other coefficients (coefficients of $U$ and $T$) as detailed in Table~\ref{tab:datgen}. For example, under the weak edge strength scenario, coefficients of  $U$ and $T$ follow a Uniform distribution between 0.3 and 0.7, and the coefficients of negative controls follow a Uniform distribution between 1.0 and 2.0.

In the binary graphical models scenario, all variables are generated from Bernoulli distributions. We first simulate an unmeasured confounder from a Bernoulli distribution with success probability 0.5. Then we generate all other nodes following Figure~\ref{fig:Graph1} from Bernoulli distributions with success probability being the sigmoid function of a linear structural equation. 
Coefficients of all variables (which are on the log odds ratio scale) are generated from a Uniform distribution between 1.0 and 2.0, and the intercept is fixed at  -1.0.

\begin{table}[!htp]
\centering
\begin{tabular}{|c|c|c|c|c|}
\hline 
\multirow{2}{*}{Distribution} & Edge & Coefficient & Coefficient & Unmeasured\tabularnewline
 & Strength & of $U$ and $T$ & of NCs & Confounder\tabularnewline
\hline 
\multirow{2}{*}{Gaussian} & weak & Unif(0.3, 0.7) & Unif(1.0, 2.0) & Normal(0, 2)\tabularnewline
 & strong & Unif(0.6, 1.0) & Unif(2.0, 4.0) & Normal(0, 2)\tabularnewline
\hline 
Binary &  & Unif(1.0, 2.0) & Unif(1.0, 2.0) & Bernoulli(0.5)\tabularnewline
\hline 
\end{tabular}
\caption{\label{tab:datgen}Parameters in data generating SEMs under  Gaussian and binary graphical models.}
\end{table}

\section{Additional Simulation Results \label{appendix:simul}}
To test robustness of the DANCE algorithm, we performed  simulation studies under two additional scenarios: 
\begin{itemize}%\itemsep -4pt
\item Stronger edges in the Gaussian graphical models;
\item All variables are binary following Bernoulli distributions. 
\end{itemize}
Figure~\ref{fig:SimresGraph2} shows the ROC curves for validation of negative controls and the proportion of bias for estimation of the ATE when the data were generated under stronger edge strength than Figure~\ref{fig:SimresGraph1}. We observe even more improved ROC curve with near perfect separation between valid and invalid negative controls when the sample size is greater than or equal to 100. In addition, DANCE provides an unbiased estimate of ATE but the Naive and Random method showed statistically significant bias even with large sample size for both graphs.
Table~\ref{tab:simu2} presents additional information on the performance of ATE estimation based on the DANCE algorithm and the Random method where the data were generated from stronger edge strength. Similar to what we observe in Table~\ref{tab:simu},  DANCE algorithm has near 95\% coverage probability  while the Random method has low coverage particularly with the large sample and under the complex graph structure.

\begin{figure}[!htp]
	\centering
\begin{subfigure}[b]{0.45\textwidth}
\centering\includegraphics[width=7.5cm,height=6cm]{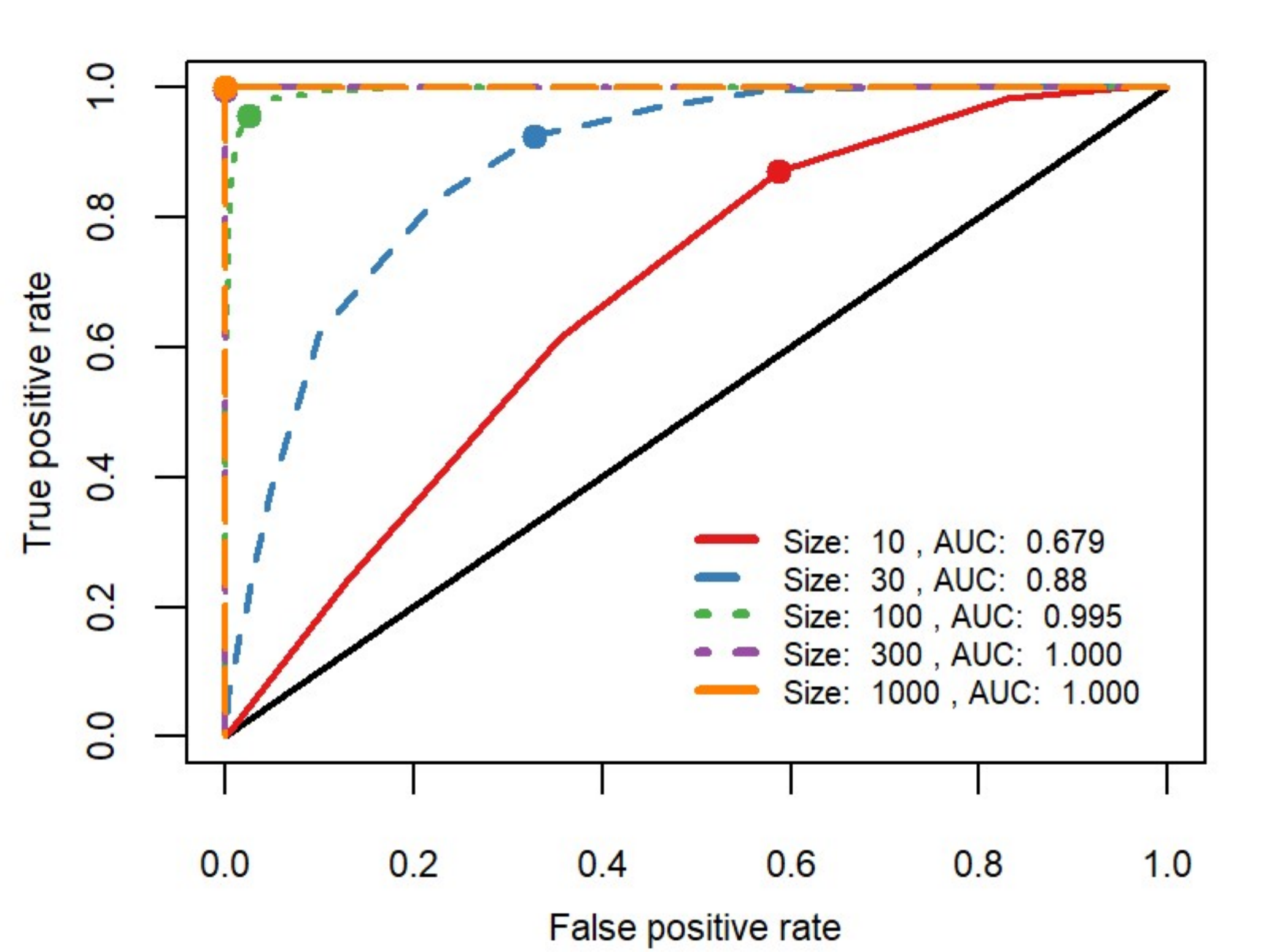}
\caption{{\small ROC curve under simple graph}\label{fig:ROC_simp_str}} 
\end{subfigure}
\hspace{0.2in}
\begin{subfigure}[b]{0.45\textwidth}
\centering\includegraphics[width=7.5cm,height=6cm]{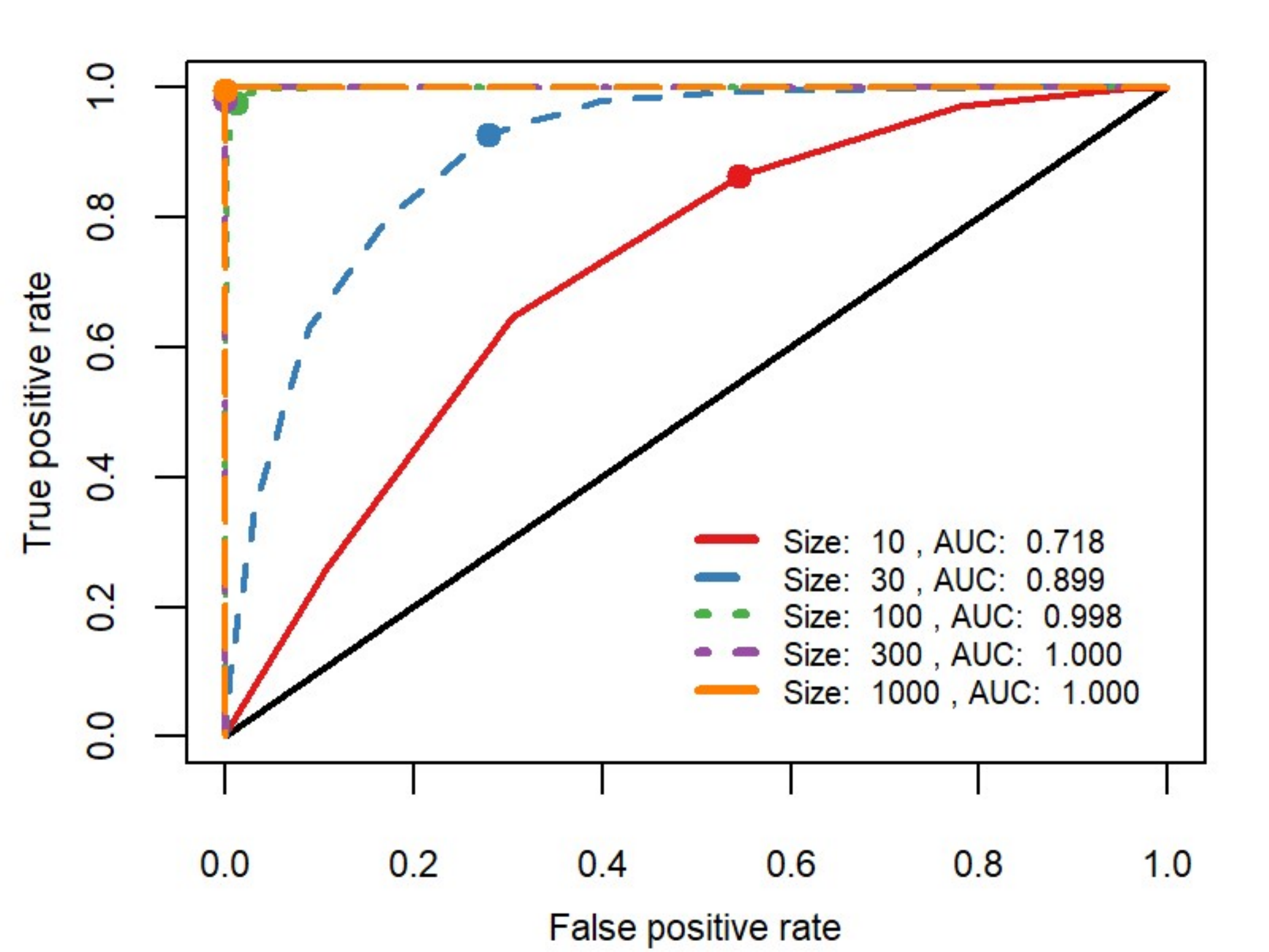} 
\caption{{\small ROC curve under complex graph}\label{fig:ROC_comp_str}} 
\end{subfigure}
\vskip\baselineskip
\begin{subfigure}[b]{0.45\textwidth}
\centering\includegraphics[width=7.5cm,height=6cm]{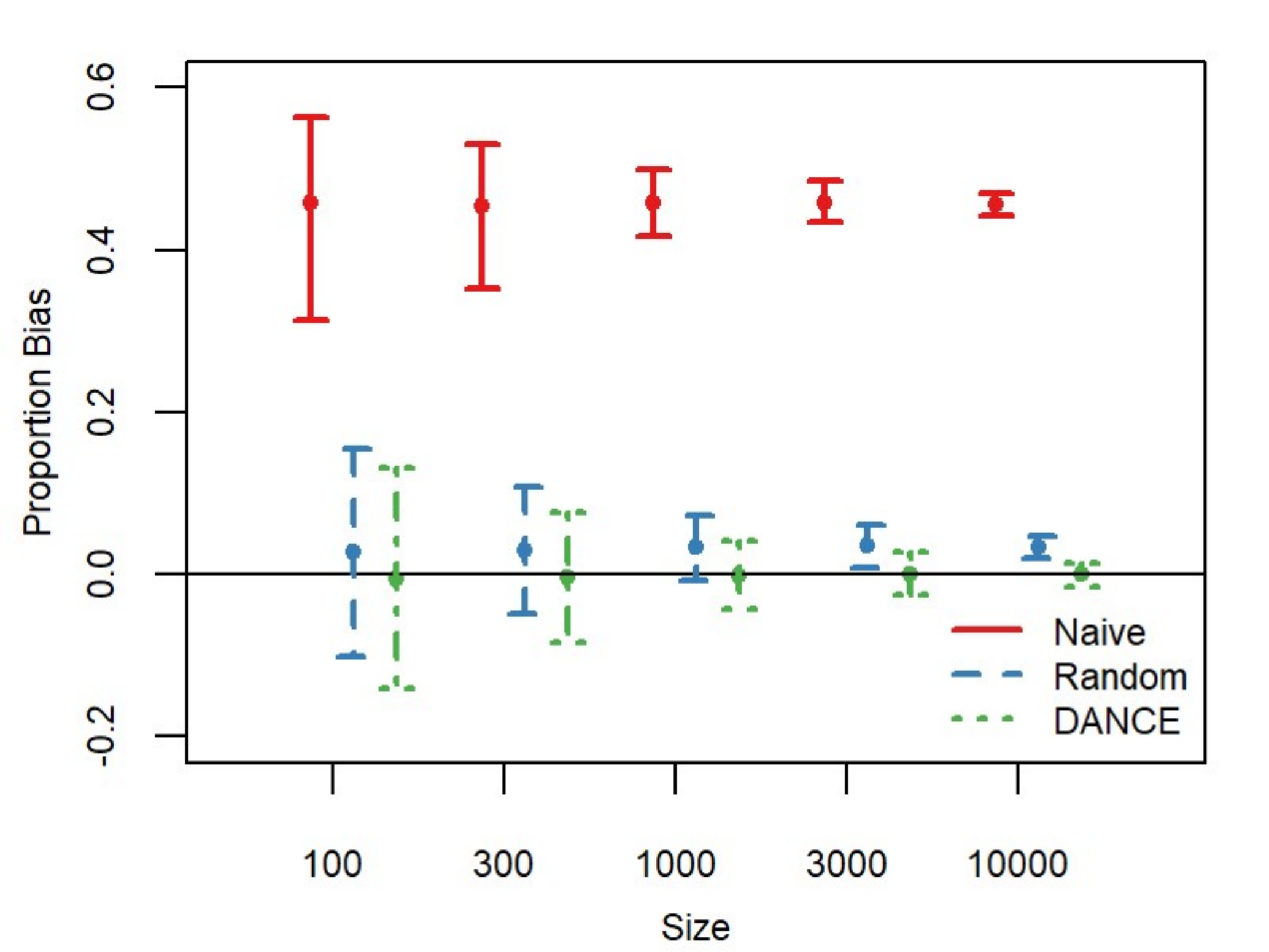}
\caption{{\small Proportion bias under simple graph}\label{fig:bias_simp_str}} 
\end{subfigure}
\hspace{0.2in}
\begin{subfigure}[b]{0.45\textwidth}
\centering\includegraphics[width=7.5cm,height=6cm]{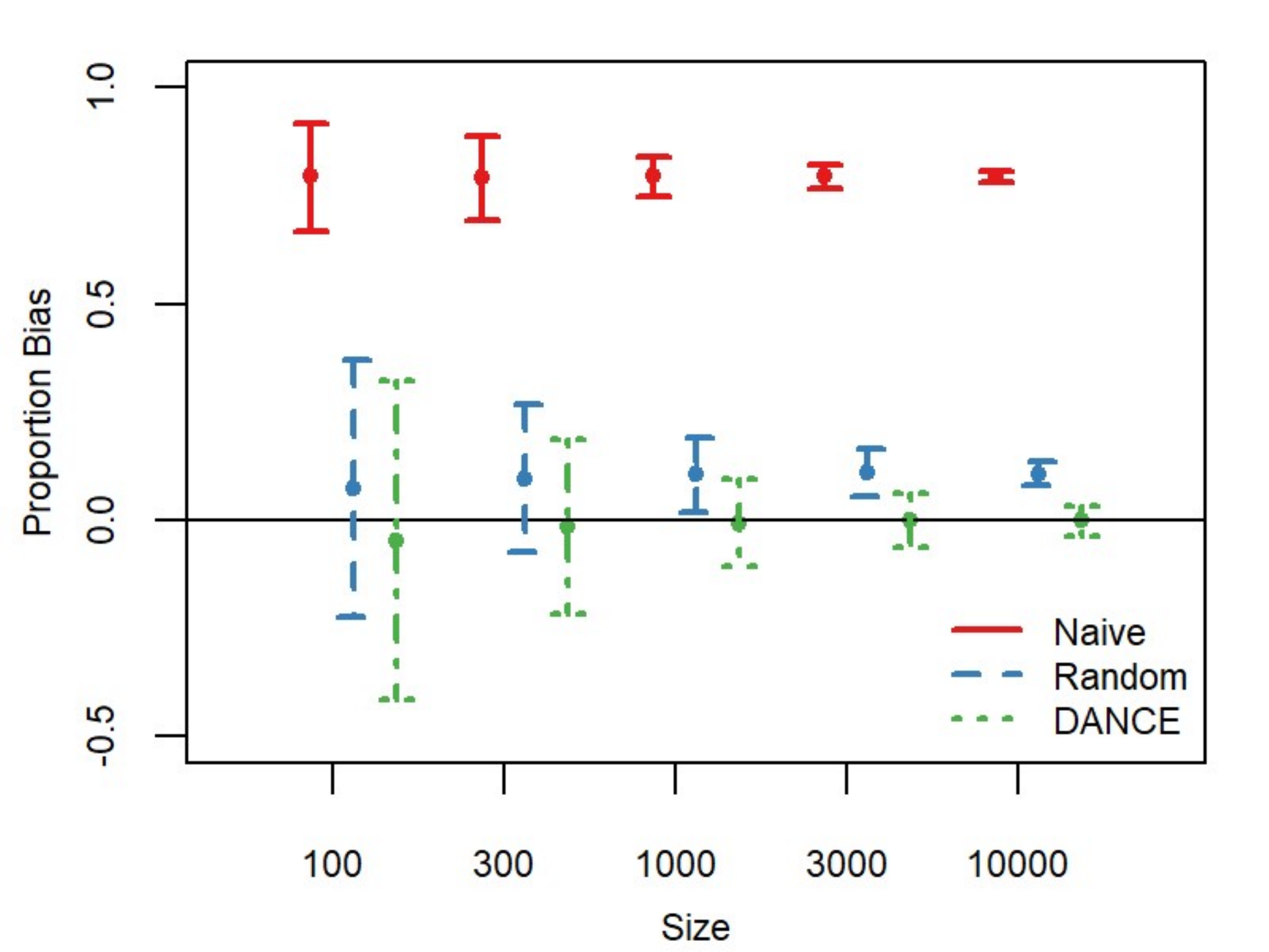}
\caption{{\small Proportion bias under complex graph}\label{fig:bias_comp_str}} 
\end{subfigure}
		\caption{Simulation results with data generated under simple and complex graphical structures with stronger edge strength.}
		\label{fig:SimresGraph2}
\end{figure}

\begin{table}[!htp]
	    \centering
	    \resizebox{\textwidth}{!}{\begin{tabular}{cccccccccccc}
\hline 
\multirow{2}{*}{Graph} & \multirow{2}{*}{Method} & \multicolumn{2}{c}{Bias} & \multicolumn{2}{c}{Proportion} & \multicolumn{2}{c}{Monte Carlo} & \multicolumn{2}{c}{Estimated} & \multicolumn{2}{c}{95\% CI}\tabularnewline
 &  & \multicolumn{2}{c}{($10^{-3}$)} & \multicolumn{2}{c}{Bias (\%)} & \multicolumn{2}{c}{SE ($10^{-3}$)} & \multicolumn{2}{c}{SE ($10^{-3}$)} & \multicolumn{2}{c}{Coverage}\tabularnewline
\hline 
\multicolumn{2}{c}{Sample size} & 1000 & 3000 & 1000 & 3000 & 1000 & 3000 & 1000 & 3000 & 1000 & 3000\tabularnewline
\hline 
\hline 
\multirow{2}{*}{Simple} & Random & 28.95 & 30.44 & 3.29 & 3.46 & 17.98 & 11.53 & 18.29 & 11.81 & 0.64 & 0.27\tabularnewline
 & DANCE & -1.08 & 0.49 & -0.12 & 0.06 & 18.99 & 11.98 & 18.75 & 12.28 & 0.94 & 0.96\tabularnewline
 &  &  &  &  &  &  &  &  &  &  & \tabularnewline
\multirow{2}{*}{Complex} & Random & 92.50 & 97.10 & 10.51 & 11.03 & 38.20 & 24.15 & 36.95 & 23.71 & 0.28 & 0.03\tabularnewline
 & DANCE & -4.98 & 1.12 & -0.57 & 0.13 & 45.72 & 28.10 & 42.90 & 28.05 & 0.95 & 0.96\tabularnewline
\hline 
\end{tabular}}
	    \caption{Operation Characteristics of the estimator based on randomly selected negative controls (Random) and the DANCE estimator for the simple and complex graphs, using stronger edge strengths.}
	    \label{tab:simu2}
	\end{table}

\begin{figure}[!t]
	\centering
\begin{subfigure}[b]{0.45\textwidth}
\centering\includegraphics[width=7.5cm,height=6cm]{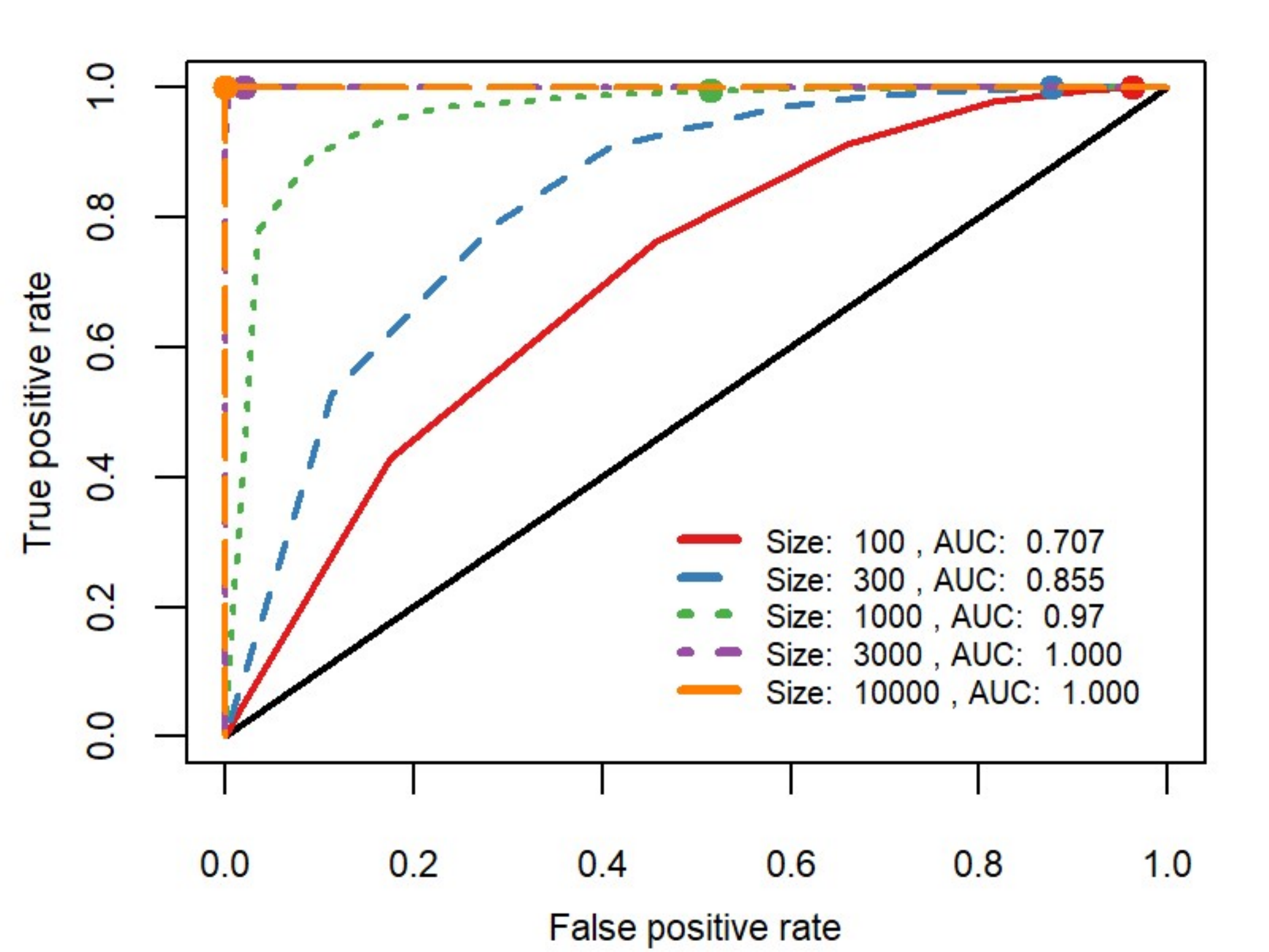}
\caption{{\small ROC curve under simple graph}\label{fig:ROC_simp_bin}} 
\end{subfigure}
\hspace{0.2in}
\begin{subfigure}[b]{0.45\textwidth}
\centering\includegraphics[width=7.5cm,height=6cm]{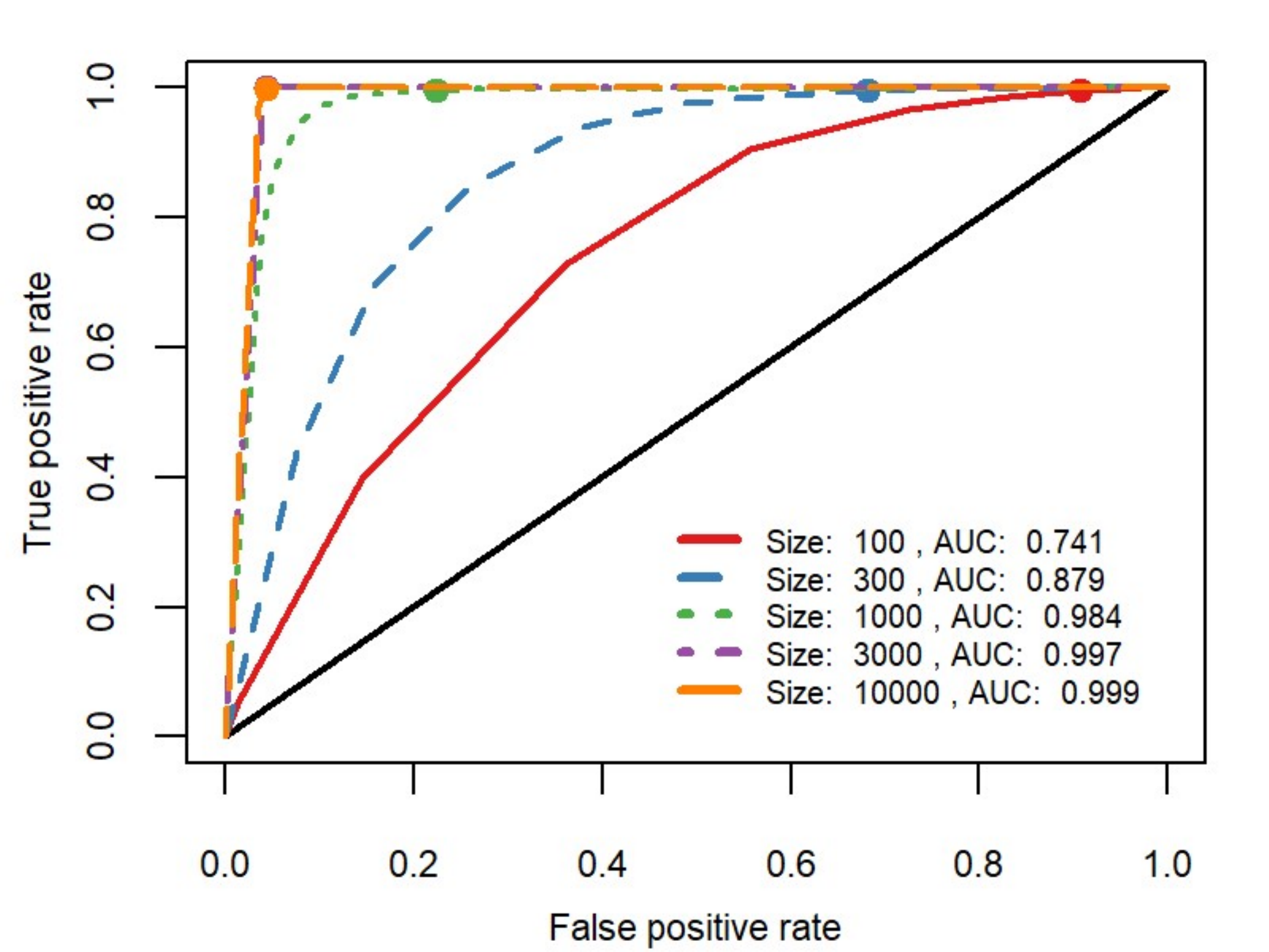} 
\caption{{\small ROC curve under complex graph}\label{fig:ROC_comp_bin}} 
\end{subfigure}
\vskip\baselineskip
\begin{subfigure}[b]{0.45\textwidth}
\centering\includegraphics[width=7.5cm,height=6cm]{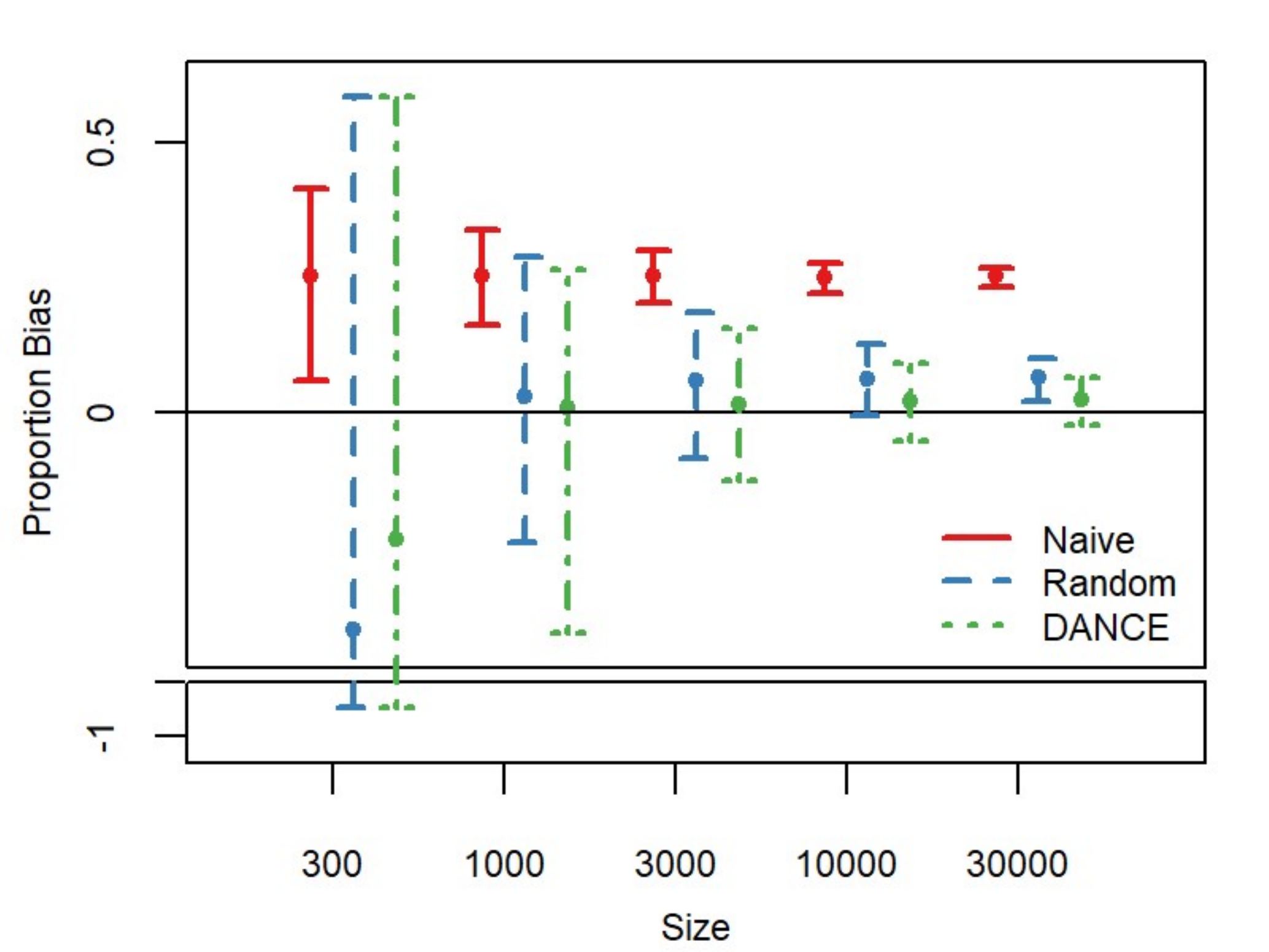}
\caption{{\small Proportion bias under simple graph}\label{fig:bias_simp_bin}} 
\end{subfigure}
\hspace{0.2in}
\begin{subfigure}[b]{0.45\textwidth}
\centering\includegraphics[width=7.5cm,height=6cm]{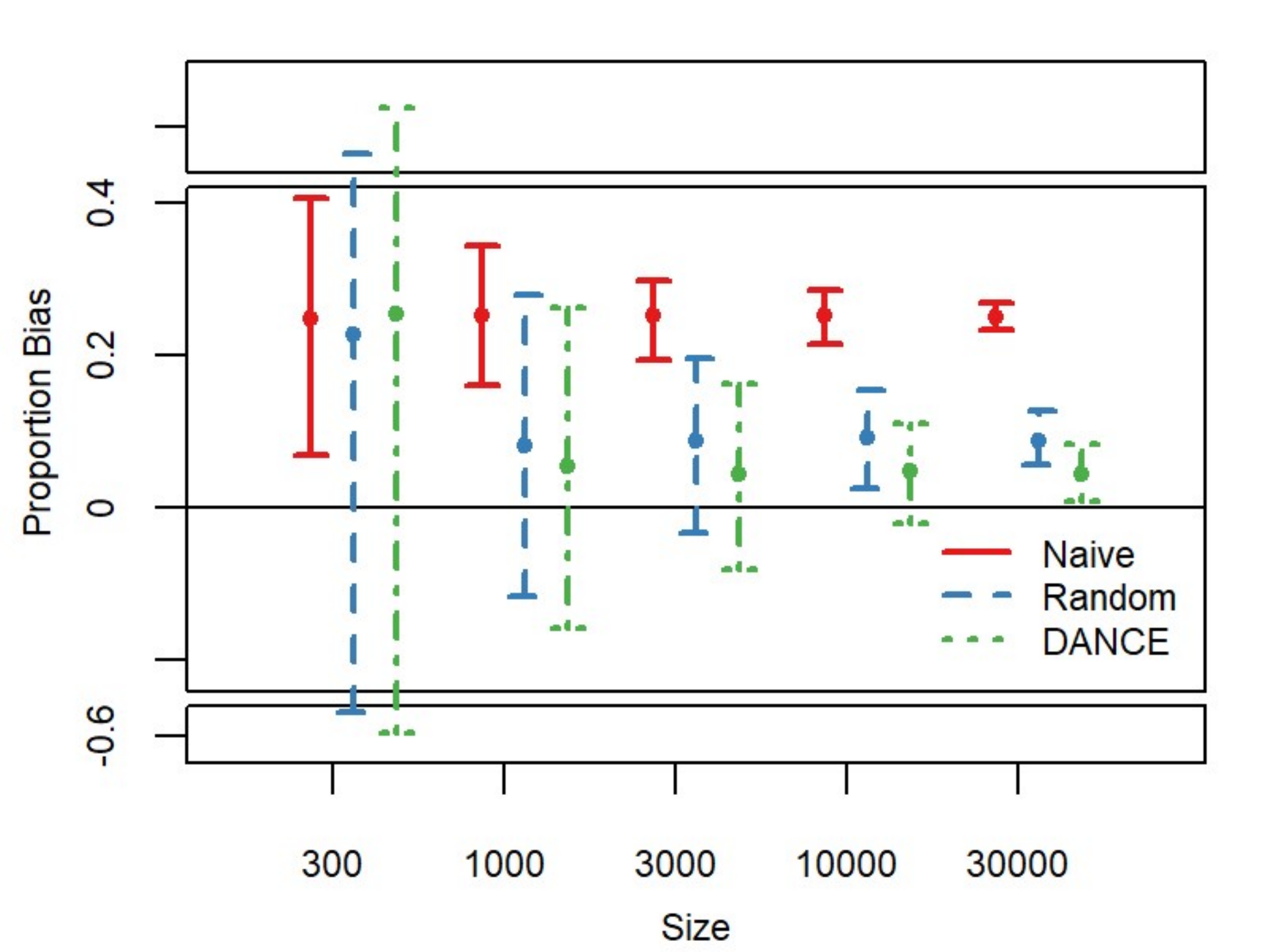}
\caption{{\small Proportion bias under complex graph}\label{fig:bias_comp_bin}} 
\end{subfigure}
		\caption{Simulation results with data generated under simple and complex graphical structure from binary random variables.}
		\label{fig:SimresGraph3}
\end{figure}

Figure~\ref{fig:SimresGraph3} shows the ROC curves for validation of negative controls and the proportion of bias for estimation of the ATE when all random variables generated are binary. As the ROC curve shows, the DANCE algorithm performs well under binary cases under both simple and complex graphical structures. DANCE's estimation of ATE is uniformly better than the Naive and Random methods.

	\clearpage
	\spacingset{1.45}
	\bibliographystyle{agsm}
	\bibliography{refs}

\end{document}